\theoremstyle{definition}
\newtheorem{property}{Property}
\newtheorem{theorem}{Theorem}
\newtheorem{definition}{Definition}
\newtheorem{lemma}{Lemma}
\newtheorem{corollary}{Corollary}
\newcommand*{\MyDef}{\mathrm{def}}
\newcommand*{\eqdef}{\ensuremath{\mathop{\overset{\MyDef}{=}}}}
\newcommand*{\nctiw}{{\bar{\texttt{I}}^\texttt{NC}}}
\newcommand*{\citiw}{{\bar{\texttt{I}}^\texttt{CI}}}
\newcommand*{\difftiw}{{\bar{\texttt{I}}^\texttt{DIFF}}}
\newcommand*{\upinter}{\bar{\texttt{I}}}
\newcommand*{\inter}{\texttt{I}}
\definecolor{scriptbg}{rgb}{0.9,0.9,0.9}			%
\definecolor{colKeys}{rgb}{0,0,1}			%
\definecolor{colIdentifier}{rgb}{0,0,0}			
\definecolor{colComments}{rgb}{0,0.5,1}			
\definecolor{colString}{rgb}{0.6,0.1,0.1}		%
\lstdefinelanguage{rbacpol}
{
    keywords={subject,role,user},
    sensitive=true,
}
\begin{document}

\title{\LARGE
	Improving Mixed-Criticality System Consistency and Behavior on Multiprocessor Platforms by Means of Multi-Moded Approaches
}

\author{Fran\c cois Santy \hspace{40 mm} Geoffrey Nelissen \hspace{40mm} Jo\"el Goossens \\[0.1in]
	\begin{minipage}{.38\linewidth}
	\begin{center}
	PARTS Research Center\\
	Universit\'e Libre de Bruxelles (ULB)\\
	Brussels, Belgium
	\end{center}
	\end{minipage}
}

\maketitle

\begin{abstract}
Recent research in the domain of real-time scheduling theory has tackled the problem of scheduling mixed-criticality systems upon uniprocessor or multiprocessor platforms, with the main objective being to respect the timeliness of the most critical tasks, at the expense of the requirements of the less critical ones. In particular, the less critical tasks are carelessly discarded when the computation demand of (some of) the high critical tasks increases. This might nevertheless result in system failure, as these less critical tasks could be accessing data, the consistency of which should be preserved. In this paper, we address this problem and propose a method to cautiously handle task suspension. Furthermore, it is usually assumed that the less critical tasks will never be re-enabled once discarded. In this paper, we also address this concern by proposing an approach to re-enable the less critical tasks, without jeopardizing the timeliness of the high critical ones. The suggested approaches apply to systems having two or more criticality levels.
\end{abstract}

\section{Introduction}\label{intro}
The current trend in embedded systems is towards collocating multiple functionalities on shared resources, as illustrated by industrial initiatives in both the aerospace and automotive industry. Nowadays, this trend is even getting stronger with the introduction of multiprocessor and/or multicore architectures. With such collocation however, it is unlikely that all functionalities share the same level of importance (\emph{criticality}), and the timeliness of some functionalities might appear more important than others. In this context, mandatory certification of whole (or subset of) the system by statutory Certification Authorities (CAs) might be required to guarantee the correct behavior of the latter. When certifying a subset of the functionalities, the CAs have to make assumptions about the Worst-Case Execution Time (WCET) of tasks. In practice, it can be observed that the more a task will be assumed critical, the more will it be certified using strongly pessimistic assumptions about its behavior at run-time. In this context, \emph{mixed-criticality} (MC-) systems are an attempt to model systems that need to be certified using \emph{various} assurance levels. The increased pessimism assumed for the more critical tasks during the certification process however introduces a significant over-provisioning of resources at run-time for the latter, having as consequence that the scheduling of mixed-criticality systems must simultaneously deal with two contradictory goals:
\begin{enumerate}
	\item On one hand, and since each task is supposed to carry out some useful computation for the system, it is desirable to respect the timeliness of all tasks, whatever their criticality is, which means that the over-provisioning of resources required during the certification of the more critical tasks could be untighten somehow at run-time;
	\item On the other hand, and since the higher the criticality of a task is, the more dramatic the consequences related to the failure of respecting their timeliness could be, it seems rather cautious to strictly respect the over-provisioning of resources of the more critical tasks, even if this means to interfere with the timeliness of the less critical ones (which is as meaningless as the criticality of these is low).
\end{enumerate}
In practice, these goals are not necessarily exclusive though, and it is possible to pursue them both, at least to some extent. Indeed, beyond vouching for the correctness of the whole (or part of) the system, the certification process also allows for specifying for how long the functionalities can execute concurrently without interfering on each others temporal constraints. Indeed, when a task is certified by some CA, it is assuming a behavior, for the other tasks of the system, the pessimism of which depends on the criticality of the task(s) being certified. Consequently, as long as these assumptions hold at run-time, all the tasks that were certified up to that assurance level are guaranteed to meet their deadline. However, if a more critical task violates these assumptions, by executing longer than was assumed during the certification process, then the timeliness of the whole system can no longer be guaranteed anymore, which results in suspending the tasks that are less critical to ensure the timeliness of the tasks being more critical. In the state-of-the-art literature, the period during which these tasks are prevented from executing, henceforth called the \emph{suspension delay}, is considered to be unlimited. \\
With little thought, one can notice that mixed-criticality systems can be seen as systems that exhibit \emph{multiple behaviors}, issued from several operating modes, where each of these operating modes is characterized by a given set of functionalities. Such systems are commonly referred to, in the real-time literature, as \emph{multi-moded} systems~\cite{Real:2004:MCP:969960.969963}. When viewing mixed-criticality systems as multi-moded systems, the transition from one operating mode to the other reduces the set of tasks performed by the system. Indeed, less critical tasks are suspended, but new tasks are never introduced (at least in the traditional way of handling mixed-criticality systems, as we will present an approach which allows to overstep this restriction). The transition from one mode to the other is triggered when the system detects that the certifications assumptions does not hold anymore.\\
When considering the temporal robustness of the system, mixed-criticality is a natural approach, since it aims at favoring the timeliness of the functionalities that are crucial for the system. Nevertheless, discarding carelessly tasks that are less critical may have severe consequences on the system's consistency. Indeed, a task that is executing could be accessing data, the integrity and consistency of which has to be preserved. Under these assumptions, it is strongly undesirable to kill a job incautiously. It would therefore be wiser to allow the jobs released by less critical tasks to complete their execution when the transition from one operating mode to the other is triggered, but this transition stage might introduce a transient overload which should have no impact on the timeliness of high criticality tasks. Furthermore, while task suspension is carried out with concern for the respect of the timeliness of the high critical tasks, a suspended functionality can no longer handle the task it is in charge of. Consequently, an everlasting suspension delay might appear as strongly undesirable, and we might wish to be able to re-enable a suspended task at some point during the execution of the system.

\subsection{Related Work}\label{related}
Mixed-Criticality scheduling is a research domain initially introduced by Vestal~\cite{Vestal:2007:PSM:1338441.1338659}. Nowadays, the Mixed-Criticality (MC)-Schedulability problem is known to arise in the context of applications being subject to multiple certification requirements. Many work has addressed the problem for systems implemented upon \emph{uniprocessor} platforms \cite{Li:2010:LSA:1879021.1879035,Baruah:2011:MSS:2040572.2040633,R:Baruah:2011a,g:p:m:w,DBLP:conf/ecrts/BaruahBDLMSS12,Santy:2012:RMS:2354411.2355225}. More recently, research has been oriented towards the study of mixed-criticality scheduling upon multiprocessor platforms~\cite{10.1109/ECRTS.2012.29}\cite{DBLP:conf/ecrts/LiB12}. Let us highlight that many of these work focused on a restricted and easier case of mixed-criticality systems, called dual-criticality systems, presenting only two distinct criticality levels. Nevertheless, to our knowledge, no work has tackled the problem that arises when mixed-criticality real-time tasks are incautiously discarded. Though in a previous work~\cite{Santy:2012:RMS:2354411.2355225}, we highlighted the fact that task suspension was often carried out too early, and that the system's computational resources could be exploited more cleverly. We also addressed the problem of managing task re-enablement. However, our work focused only uniprocessor platforms, while this work will be applied to the more general case of identical multiprocessor platforms.

\subsection{This Research}\label{ourWork}
We believe that a multi-mode approach can bring some insight on how the problems mentioned in Section~\ref{related} could be formalized. To our knowledge, mixed-criticality systems have never formally been defined as multi-moded systems. In this research, we thus seek to establish a formal link between mixed-criticality and multi-moded hard real-time systems upon identical multiprocessor platforms. We will then show how to safely complete the execution of jobs released by the less critical tasks, potentially after their deadline, when the computational demand of the more critical tasks in the systems increases. In our opinion, it is safer in terms of system integrity to complete a job beyond its deadline, instead of dropping it carelessly once the latter is reached. We will also show how to safely reduce the suspension delay suffered by the less critical tasks as the system load decreases, by allowing the re-enablement of the latter. This problem was almost never tackled in the current literature, but still presents a significant interest, since the less critical tasks can improve the overall behavior of the system. Finally, since the notion of mixed-criticality systems initially did not put any restriction on the number of criticality levels, and has been implemented using various degrees of criticality by industrial standards\footnote{The RTCA-DO178B standard, for aerospace, defines 5 criticality levels, while the ISO26262, in the automotive industry, defines 4 criticality levels.}, our approach can be applied to task sets having any number of criticality level. This is an attractive feature, as it is much more compliant with current industrial standards.

\section{Model and Definitions}\label{modelDef}
\subsection{Mixed-criticality Specification}\label{subsec:model}
We consider multiprocessor platforms composed of a fixed number $m$ of identical processors, denoted by $\pi = \{\pi_1, \pi_2, ..., \pi_m\}$. Furthermore, we consider mixed-criticality sporadic task sets $\tau = \{\tau_1, \tau_2, ..., \tau_n\}$, where the maximum criticality of a task is bounded by a natural value which we denote by $\mathnormal{L}$. A task $\tau_i$ in such a system is characterized by a 4-tuple of parameters $\{T_i, D_i, L_i, C_i\}$ where:
\begin{compactitem}
	\item[$\bullet$] $T_i \in \mathbb{N}_0$ is the minimum inter-arrival time separating two consecutive activations of task $\tau_i$;
	\item[$\bullet$] $D_i \in \mathbb{N}_0$ is the deadline of task $\tau_i$, with $D_i \leq T_i$;
	\item[$\bullet$] $L_i \in \mathbb{N}_0$ is the criticality of the task $\tau_i$, with $L_i \leq \mathnormal{L}$;
	\item[$\bullet$] $C_i \in \mathbb{N}_0^{\mathnormal{L}}$ is a size $L$ vector of WCET, where $C_i(\ell)$ is an estimation of the WCET of task $\tau_i$ at criticality level $\ell \in [1,\mathnormal{L}]$.
\end{compactitem}
Given these parameters, each task $\tau_i$ will generate a potentially infinite sequence of jobs, each release being separated by at least $T_i$ time units, and each job having a hard deadline $D_i$ time units after its release. We assume $C_i(\ell)$ is monotonically increasing for increasing values of $\ell$. More precisely, for task $\tau_i$ the following two conditions hold:
\begin{compactitem}
	\item[$\bullet$] $\forall \ell \in [1, L_i)$, $C_i(\ell) \leq C_i(\ell+1)$;
	\item[$\bullet$] $\forall \ell \in [L_i, \mathnormal{L}]$, $C_i(\ell) = C_i(L_i)$.
\end{compactitem}
It follows that no task is supposed to execute longer than its WCET at its own criticality level. The $k^\text{th}$ job $J_{i,k}$ released by a mixed-criticality task $\tau_i$ is characterized by a 3-tuple $\{r_{i,k}, d_{i,k}, c_{i,k}\}$ where:
\begin{compactitem}
	\item[$\bullet$] $r_{i,k} \in \mathbb{N}$ is the time instant at which $J_{i,k}$ was released. Since we consider sporadic task systems, we have $r_{i,k} - r_{i,k-1} \geq T_i$, and $r_{i,1} \geq 0$;
	\item[$\bullet$] $d_{i,k} \in \mathbb{N}_0$ is the absolute deadline of $J_{i,k}$. More precisely, $d_{i,k} \eqdef r_{i,k} + D_i$;
	\item[$\bullet$] $c_{i,k} \in \mathbb{N}_0$ is the exact execution time of $J_{i,k}$. From the specifications of $\tau_i$, we can say that $c_{i,k} \leq C_i(L_i)$, but the exact value of $c_{i,k}$ will not be known until $J_{i,k}$ completes its execution;
	\item[$\bullet$] $f_{i,k} \in \mathbb{N}_0$ is the absolute time at which $J_{i,k}$ finishes its execution. Again, this value will only be known once $J_{i,k}$ actually completes its execution.
\end{compactitem}

\begin{definition}[Available job]
At any time $t$, we call the $k^\text{th}$ job $J_{i,k}$ released by task $\tau_i$ \emph{available} if $t \geq r_{i,k}$ and $J_{i,k}$ has not yet completed its execution.
\end{definition}

The actual execution time of job $J_{i,k}$ is not known from the specification of $\tau_i$, but will only be discovered when $J_{i,k}$ completes its execution. Besides, the behavior of task $\tau_i$ might change from one execution to the other, so the actual executions times of the sequence of jobs released by $\tau_i$ may vary, leading to introduces the notion of \emph{task scenario}.

\begin{definition}[Task scenario]
We define the scenario $s_i^t$ of task $\tau_i$ at time $t$ as the set of exact execution times $\{c_{i,1}, ..., c_{i,k}\}$ for each of the $k$ ($k$ being dependent of $t$) jobs released by $\tau_i$ that already completed their execution at time $t$.
\end{definition}

\begin{definition}[Task set scenario]
We define the scenario of the mixed-criticality task set $\tau$ at time $t$ as $s^t = \{s_1^t, ..., s_n^t\}$.
\end{definition}

\begin{definition}[$\ell$-interval]
Given a task set scenario, an $\ell$-interval is an interval $[t_a,t_b)$ such that at time $t_a$, the system switched at criticality level $\ell$, and the criticality of the system remained at level $\ell$ until time $t_b$.
\end{definition}

\begin{definition}[Worst-case response time]
The worst-case response time (WCRT) $R_i(\ell)$ of task $\tau_i$ at criticality level $\ell$ is the maximum duration to execute any job of $\tau_i$, in any scenario of criticality $\ell$.
\end{definition}

\subsection{Modeling Mixed-Criticality in Terms of Multi-Mode}\label{MCtoMM}
In this section, we propose a formalization of sporadic real-time mixed-criticality tasks in terms of multi-mode tasks. Each such task can be seen as being composed of different execution modes, and can be represented by means of a tuple, as described below:
\begin{align*}
	\tau_i = \{	&\tau_i^{M_1} = \{T_i^{M_1}, D_i^{M_1}, C_i^{M_1}\},... \\
			&\tau_i^{M_\mathnormal{L}} = \{T_i^{M_\mathnormal{L}}, D_i^{M_\mathnormal{L}}, C_i^{M_\mathnormal{L}}\}\}
\end{align*}
where $M_1, ..., M_\mathnormal{L}$ represent the different operating modes to which the task belongs, and $X_i^{M_y}$ represents the value of parameter $X$ of task $\tau_i$ when $\tau_i$ is executed in the operating mode $M_y$. Consequently, a natural representation for the mixed-criticality task $\tau_i = \{T_i, D_i, L_i, \{C_i(1), ..., C_i(L_i)\}\}$ is by means of the following tuple:
\begin{align*}
	\tau_i = \{	\tau_i^{1} &= \{T_i, D_i, C_i(1)\},... \\
			\tau_i^{L_i} &= \{T_i, D_i, C_i(L_i)\}\}	
\end{align*}
The interpretation of the tuple is the following: when the system is executing in operating mode $M_{\ell_1}$, task $\tau_i$ is executed according to the parameters given by $\tau_i^{\ell_1}$. The scheduler thus makes the assumption that $\tau_i$ will not generate a job the execution time of which will exceed $C_i(\ell_1)$. If $\tau_i$ does release a job whose execution exceeds $C_i(\ell_1)$, then a transition from one operating mode to another occurs. If the system switches from mode $M_{\ell_1}$ to mode $M_{\ell_2}$, then $\tau_i$ is executed according to the parameters given by $\tau_i^{\ell_2}$. From that specific time onward, the scheduler consequently assumes that $\tau_i$ will not generate jobs whose executions will exceed $C_i(\ell_2)$. A task $\tau_i$ belongs to every operating mode up to its own criticality level. More precisely, $\tau_i \in M_\ell \Leftrightarrow \ell\leq L_i$. A task $\tau_i$ will be enabled in every operating mode to which it belongs, and will be suspended in every other operating mode.
\begin{definition}[Enabled/suspended task]
At run-time, a task $\tau_i$ is said to be enabled if $\tau_i$ can generate new jobs that are dispatched by the scheduler. Otherwise, $\tau_i$ is said to be suspended. 
\end{definition}
In the following, we will consider that the system is running is mode $M_\ell$ at time $t$ if and only if all tasks of criticality less than $\ell$ are disabled, no job of criticality less than $\ell$ is still available at time $t$, and every task of criticality greater than or equal to $\ell$ is enabled. Notice that at system start-up, the system is running in operating mode $M_1$, which implies that every task is enabled. In the remainder of the paper, we will denote by $\tau^\ell=\{\tau_{\ell_1},...,\tau_{n_\ell}\} \subseteq \tau$ the set of tasks belonging to operating mode $M_\ell$, i.e.\@ the set of tasks $\tau_j$ with $L_j \geq \ell$.

\subsection{Presentation of the $\mathsf{MSM}$ Scheduler}\label{subsec:scheduler}
In Section~\ref{subsec:model}, we introduced the concept of criticality level of the scenario. This concept allows to define what is considered as being a feasible schedule when considering mixed-criticality systems.
\begin{definition}[Feasible schedule]
A schedule for a scenario $s^t$ is feasible if, during every $\ell$-interval, $1\leq\ell\leq L$, every job $J_{i,k} \, | L_i \geq \ell$ ($1 \leq i \leq n$) that completed its execution by time $t$, received execution time $c_{i,k}$ between $r_{i,k}$ and $d_{i,k}$.
\end{definition}
This definition implies that mixed-criticality scheduling is only concerned with respecting temporal constraints of tasks the criticality of which is higher than or equal to the criticality level of the scenario.

\begin{definition}[$\mathcal{S}$-Schedulable]
Let $\mathcal{S}$ be a scheduling policy, and $\tau$ a mixed-criticality task set. We say $\tau$ is $\mathcal{S}$-Schedulable if, for \emph{any} scenario of $\tau$, $\mathcal{S}$ generates a feasible schedule.
\end{definition}
Since an on-line scheduling policy discovers the exact execution time of the jobs when they complete their execution, the criticality level of the scenario is not known beforehand. In the state-of-the-art way of handling mixed-criticality systems, as soon as a job exceeds its WCET of level $\ell$, the criticality of the scenario is raised to level $\ell+1$, all available jobs of criticality lower than $\ell+1$ are \emph{dropped}, and future releases from tasks of which the criticality is lower than $\ell+1$ are no longer taken into consideration. At this point, recall that the goal of this research is to overstep these restrictions. Finally, the following definition specifies what is considered as being an MC-Schedulable system.

\begin{definition}[MC-Schedulable]
A mixed-criticality task set $\tau$ is MC-schedulable if there exists a scheduling policy $\mathcal{S}$ such that $\tau$ is $\mathcal{S}$-schedulable.
\end{definition}

In the remainder of this paper, we consider the multiprocessor \emph{global} \emph{preemptive} \emph{work-conservative} \emph{static-priority} scheduler $\mathsf{MSM}$ proposed by Pathan~\cite{10.1109/ECRTS.2012.29}, with the following interpretation:
\begin{compactitem}
	\item[$\bullet$] at each time instant, a global scheduler dispatches the $m$ highest priority jobs (if any) on the $m$ processors of the platform;
	\item[$\bullet$] a preemptive scheduler reserves the right to interrupt a job $J_i$ belonging to a task $\tau_i$, that is executing on a given processor, to assign this processor to another available job $J_j$ belonging to a task $\tau_j$, with $i\neq j$;
	\item[$\bullet$] a work-conservative scheduling strategy never keeps a processor idle when there are available jobs;
	\item[$\bullet$] a static-priority scheduler assigns priorities to the tasks of the system, each job being scheduled inheriting from the priority of the task that released it.
\end{compactitem} 
The algorithm $\mathsf{MSM}$ thus dispatches available jobs according to traditional global static-priority scheduling, but has two additional implementation features, namely \emph{job execution monitoring}, in order to detect a transition to a higher criticality level, and \emph{task suspension}, to drop the less critical tasks when such a transition is detected. To our knowledge, $\mathsf{MSM}$ is the only multiprocessor scheduler that can be applied to mixed-criticality task sets having any number of criticality levels. Furthermore, and as already motivated by Pathan~\cite{10.1109/ECRTS.2012.29}, static-priority schedulers are generally preferred by industries, as their decisions are predictable, thus enforcing the reliability property of the real-time systems they schedule. \\
The priority ordering of a task set is constructed according to Audsley's Optimal Priority Assignment (OPA) approach~\cite{Audsley_1991}, according to the response time analysis of the system. These notions are further discussed in Sections~\ref{sec:wcrt} and~\ref{sec:audsley}. In the remainder of the paper, we will denote by $\texttt{hp}(\tau_i)$ the set of tasks $\tau_j$ having a priority higher than $\tau_i$, and by $\texttt{lp}(\tau_i)$ the set of tasks $\tau_j$ having a priority lower than $\tau_i$.

\subsubsection{The Worst-Case Response Time Computation}\label{sec:wcrt}~\\
This section briefly introduces the work proposed by Pathan~\cite{10.1109/ECRTS.2012.29}. An upper-bound on the WCRT $R_i(\ell)$ of task $\tau_i$ at criticality level $\ell \in [1, L_i]$ is computed by considering the execution of any job $J_{i,k}$ released by $\tau_i$ in a window starting at $J_{i,k}$'s release time $r_{i,k}$, and finishing $\Delta$ time units later, i.e.\@ at time instant $r_{i,k}+\Delta$. The procedure aims at defining an upper-bound on the \emph{workload}, the \emph{interfering workload}, the \emph{total interfering workload}, and the \emph{interference} (as explained in the following definitions) suffered by task $\tau_i$ in any scenario of criticality level $\ell$ during that interval.

\begin{definition}[Workload]\label{def:workload}
The \emph{workload} of a higher priority task $\tau_j$ within the window of size $\Delta$ is the cumulative length of time interval during which the jobs released by $\tau_j$ execute within the window.
\end{definition}

A task $\tau_j\in\texttt{hp}(\tau_i)$ is considered as a \emph{carry-in} task if $\tau_j$ released a job before the start of the window, and that job executes (partially or fully) within the window. Otherwise, $\tau_j$ is considered as a \emph{non carry-in} task. Pathan~\cite{10.1109/ECRTS.2012.29} highlighted the fact that if a higher-priority task is a carry-in task, then its worst-case interference on the lower-priority task is higher than if it was non carry-in.

\begin{definition}[Carry-in/Non Carry-In Interfering Workload]\label{def:IntWorkload}
The \emph{carry-in interfering workload} $\citiw_{j,i}(\Delta,\ell)$ (resp. \emph{non carry-in interfering workload} $\nctiw_{j,i}(\Delta,\ell)$) of $\tau_j$ on task $\tau_i$ is the cumulative length of time interval during which a job released by a carry-in task $\tau_j$ (resp. a non carry-in task $\tau_j$) executes, and $J_{i,k}$ not dispatched on any processor.
\end{definition}

The difference between the carry-in and non carry-in interfering workload of a task $\tau_j\in\texttt{hp}(\tau_i)$ will be denoted by $\difftiw_{j,i}(\Delta,\ell) \eqdef \citiw_{j,i}(\Delta,\ell)-\nctiw_{j,i}(\Delta,\ell)$

\begin{definition}[Total Interfering Workload]\label{def:totIntWorkload}
The \emph{total interfering workload} $\upinter_i(\Delta, \ell)$ is the sum of interfering workload of all the higher priority tasks within the window.
\end{definition}
The $\mathsf{MSM}$ algorithm computes the total interfering workload of task $\tau_i$ as follows:
	\begin{equation}\label{eq:upperBoundTotIntWorkload}
		\upinter_i(\Delta, \ell) = \sum_{\tau_j\in\texttt{hp}(\tau_i)} \nctiw_{j,i}(\Delta,\ell) + \sum_{\tau_j\in\texttt{hp}_{m-1}(\tau_{i})} \difftiw_{j,i}(\Delta,\ell)
	\end{equation}
where $\texttt{hp}_{m-1}$ is the set of at most $m-1$ carry-in tasks belonging to $\texttt{hp}(\tau_i)$ that have the largest value of $\difftiw_{j,i}(\Delta,\ell)$.

The reason to consider at most $m-1$ carry-in tasks comes from a discussion from Guan \emph{et al.}\@~\cite{Guan:2009:NRT:1683310.1684939}, formalized in the following property.

\begin{property}\label{property:guan}
The total interfering workload is upper-bounded by considering at most $m-1$ carry-in tasks within the window of any lower priority task, when considering global static priority scheduling of constrained-deadline sporadic task sets.
\end{property}

\begin{definition}[Interference]\label{def:interference}
The interference suffered by a task $\tau_i$ in any scenario of criticality level $\ell$, and during a time interval of length $\Delta$, is the cumulative length of time interval during which the $m$ processors are busy executing tasks belonging to $\texttt{hp}(\tau_i)$. An upper-bound on the interference suffered by $\tau_i$ over an interval of length is given by $\left\lfloor \frac{\upinter_{i}(\Delta, \ell)}{m} \right\rfloor$.
\end{definition}

Finally, and from the above definitions, since in any scenario of criticality level $\ell$, $J_{i,k}$ is allowed to execute for at most $C_i(\ell)$ time units, the WCRT $R_i(\ell)$ of task $\tau_i$ at criticality level $\ell$ is obtained by determining the least fixed point of the following function:
	\begin{equation}
		R_i(\ell) = C_i(\ell) + \left\lfloor \dfrac{\upinter_{i}(R_i(\ell), \ell)}{m} \right\rfloor
	\end{equation}
	
Since $\upinter_i(\Delta,\ell)$ is an upper-bound on the total interfering workload suffered by task $\tau_i$, the \emph{actual} total interfering workload suffered by $\tau_i$ over an interval of length $\Delta$, in any scenario of criticality level $\ell$, will be denoted by $\inter_{i}^*(\Delta,\ell)$.

\subsubsection{Finding Priorities Using Audsley's Approach}\label{sec:audsley}~\\
The response time analysis described in Section~\ref{sec:wcrt} is used to find a static-priority ordering of the mixed-criticality task set $\tau$, as depicted by Algorithm~\ref{alg:priorityAssignment}. Indeed, at each step, the method tries to identify a task $\tau_i$ satisfying $R_i(\ell) \leq D_i$, $1 \leq \ell \leq L_i$\footnote{This is due to the fact that to our knowledge, there is no proof that $R_i(\ell) \leq R_i(\ell+1)$, $1\leq\ell<L$ holds.}. If such a task is found, then the same reasoning is iteratively applied to task set $\tau\setminus\{\tau_i\}$. Otherwise, if for every task $\tau_i$, $\exists \ell \, | \, R_i(\ell) > D_i$, then the method fails.
\setlength{\textfloatsep}{0pt}
	\begin{algorithm}
	$\mathsf{pr} \leftarrow |\tau|$ \\
	\While{$\tau \neq \varnothing$}{
		Let $\tau_i$ be a task from $\tau$\;
		\If{$R_i(\ell) \leq D_i$, $1\leq\ell\leq L_i$}{
			Assign $\tau_i$ the priority $\mathsf{pr}$\;
			$\tau \leftarrow \tau\setminus\{\tau_i\}$\;
			$\mathsf{pr} \leftarrow \mathsf{pr}-1$\;
		}
		\Else{
			\textbf{return} error\;
		}
	}
	\caption{OPA Algorithm}\label{alg:priorityAssignment}
	\end{algorithm}

\subsection{Mode Transition Specifications}\label{subsec:mts}
A change in the operating mode of the system is instantiated whenever the system detects a change in its internal state. More precisely, switching from criticality level $\ell_1$ to criticality level $\ell_2$ can be seen as switching from one operating mode, referred to as the \emph{old}-mode $M_{\ell_1}$, to another operating mode, referred to as the \emph{new}-mode $M_{\ell_2}$. A \emph{Mode Change Request} (\emph{MCR}) is defined as being the event that triggers a mode change transition. 
In the current state-of-the-art literature, the only MCR that is considered is triggered whenever the system detects that a task $\tau_i$ exceeds its WCET at criticality level $\ell$, $1 \leq \ell < L_i$, resulting in the system to switch from operating mode $M_\ell$ to operating mode $M_{\ell+1}$. In the remainder of this paper, this MCR will be referred to as an \emph{increasing mode change request} denoted by $\texttt{IMCR}_{\ell+1}$. 
\begin{definition}[Increasing Mode Change Request]
Whenever the system is running in operating mode $M_\ell$, an increasing mode change request from criticality level $\ell$ to criticality level $\ell+1$, $\forall \ell \in [1, L)$, denoted by $\texttt{IMCR}_{\ell+1}$, is the event that will result in switching from operating mode $M_{\ell}$ to operating mode $M_{\ell+1}$. The time at which $\texttt{IMCR}_{\ell+1}$ is triggered will be denoted by $t_{\texttt{IMCR}_{\ell+1}}$.
\end{definition}

To handle this transition, the traditional approach that is currently implemented in the state-of-the-art literature consists in suspending \emph{instantaneously} and \emph{forever} tasks that do not belong to the new mode, their potential available jobs at time $t_{\texttt{IMCR}_{\ell+1}}$, henceforth called the rem-jobs, not even being dispatched anymore. The transition stage is thus trivial, and the system instantaneously switches from operating mode $M_\ell$ to operating mode $M_{\ell+1}$. By suspending the less critical tasks upon an $\texttt{IMCR}_{\ell+1}$, the goal is to respect both the ($\ell$+1)-periodicity and ($\ell$+1)-feasibility, defined as follows. 

\begin{definition}[$\ell$-periodicity, adapted from~\cite{Real:2004:MCP:969960.969963}]
The $\ell$-periodicity is the property that requires the activation pattern of each task $\tau_i$ such that $L_i \geq \ell$ to be respected. In other words, the activation pattern of a task $\tau_i$ should not be altered by increasing mode change requests up to $\tau_i$'s own criticality.
\end{definition}

\begin{definition}[$\ell$-feasibility, adapted from~\cite{Real:2004:MCP:969960.969963}]
The $\ell$-feasibility is the property that requires each task $\tau_i$ such that $L_i \geq \ell$ to meet its deadline.
\end{definition} 

Consequently, an increasing mode change request $\texttt{IMCR}_{\ell+1}$ $\forall \ell < L_i$ should have no impact on the deadline of a task $\tau_i$. With concern for the respect of these properties, the traditional approach mentioned above however introduces two major drawbacks. On one hand, and since the tasks belonging to the old-mode might be accessing data, it might appear more cautious to allow them to complete their execution before discarding them. This nevertheless could result in a temporary overload which could jeopardize the timeliness of the new-mode tasks. The transition should therefore be managed in such a way that the ($\ell$+1)-periodicity and ($\ell$+1)-feasibility properties are preserved, while the rem-jobs could be allowed to complete their execution. Furthermore, as an everlasting suspension might not be necessary to keep on respecting the ($\ell$+1)-periodicity and ($\ell$+1)-feasibility, we would also like to re-enable the less critical tasks as soon as possible. Achieving the second goal nevertheless calls for the use of a new mode change request, namely a \emph{decreasing mode change request}. 

\begin{definition}[Decreasing Mode Change Request]
Whenever the system is running in operating mode $M_{h}$, a decreasing mode change request from criticality level $h$ to $\ell$, $\forall \ell \in [1, h)$, denoted by $\texttt{DMCR}_{\ell}$, is the event that will result in switching from operating mode $M_{h}$ to operating mode $M_\ell$. The time at which $\texttt{DMCR}_{\ell}$ is triggered will be denoted by $t_{\texttt{DMCR}_{\ell}}$.
\end{definition}

We again insist on the fact that current research in the field of mixed-criticality never considers the re-enablement of previously suspended tasks, at the exception of~\cite{Santy:2012:RMS:2354411.2355225}. Based on the two mode change requests that are considered, we present in Section~\ref{sec:protocol} mode change protocols whose goals are:
\begin{compactitem}
	\item[$\bullet$] Upon an $\texttt{IMCR}_{\ell+1}$, complete the execution of the rem-jobs as soon as possible, as this will improve system consistency;
	\item[$\bullet$] Upon a $\texttt{DMCR}_{\ell}$, re-enable the previously suspended tasks belonging to mode $M_\ell$ as soon as possible, as this will improve the system's overall behavior.
\end{compactitem} 
\begin{definition}[Valid increasing transition protocol]
Assuming the system is running in operating mode $M_\ell$, an increasing transition protocol $\mathcal{P}$ is said to be \emph{valid} if and only if, upon an $\texttt{IMCR}_{\ell+1}$, $\mathcal{P}$ ensures both the ($\ell+1$)-periodicity and ($\ell+1$)-feasibility.
\end{definition}
\begin{definition}[Valid decreasing transition protocol]
Assuming the system is running in operating mode $M_h$, a decreasing mode transition protocol $\mathcal{P}$ is said to be \emph{valid} if and only if, upon an $\texttt{DMCR}_{\ell}$, $\mathcal{P}$ ensures both the $h$-periodicity and $h$-feasibility.
\end{definition}
It follows that the traditional approach is a valid increasing transition protocol, since upon a $\texttt{IMCR}_{\ell+1}$, and from time $t_{\texttt{IMCR}_{\ell+1}}$ onward, it aims only at respecting the timeliness of tasks the criticality of which is at least equal to $\ell+1$. Furthermore, the former approach does not consider decreasing mode change requests, so the timeliness of the high critical tasks can not be jeopardized by the re-enablement of the less critical tasks. By extending this approach to allow the rem-jobs to complete their execution, and allow the re-enablement of the less critical tasks, we thus only seek to achieve a cleverer usage of the platform on which the mixed-criticality system is executed, without compromising the reliability of the considered mixed-criticality system.
In the literature related to multi-moded systems, it is sometimes assumed that a mode change request $\texttt{IMCR}_{\ell+1}$ may only be triggered in the steady state of the system, i.e.\@ whenever the system is in mode $M_\ell$. In the case of mixed-criticality systems however, a mode change request $\texttt{IMCR}_{\ell+2}$ could happen during the handling of the mode change request $\texttt{IMCR}_{\ell+1}$. Indeed, in mode $M_\ell$, it is assumed no task $\tau_i$ will generate a job the execution time of which will exceed $C_i(\ell)$. If this nevertheless happens, then $\texttt{IMCR}_{\ell+1}$ is triggered, but this does not prevent the job of $\tau_i$ to keep on executing, potentially leading to exceed $C_i(\ell\text{+1})$ and trigger $\texttt{IMCR}_{\ell+2}$ while still handling $\texttt{IMCR}_{\ell+1}$.\\

\section{The Protocol}\label{sec:protocol}

\subsection{Introductory Definitions}
The OPA procedure described in Section~\ref{subsec:scheduler} considers specific scenarios for each tasks of the system, where the execution time of each job $J_{i,k}$ is exactly equal to the WCET of task $\tau_i$ at a given criticality level $\ell$ such that $1 \leq \ell \leq L_i$. These scenarios are called \emph{basic}, as formalized by the following definitions.

\begin{definition}[Basic task scenario]\label{def:tasksetscenario}
At any time $t$, we call the scenario $s_i^t$ of task $\tau_i$ basic if for all $c_{i,k} \in s_i^t$, there exists a criticality level $\ell_k \in [1, L_i]$ such that $c_{i,k} = C_i(\ell_k)$.
\end{definition}

In a basic task scenario, each job released by a task $\tau_i$ will thus complete its execution after having consumed exactly its WCET at a given criticality level (which is not necessarily equal to the task's own criticality level $L_i$).

\begin{definition}[Basic task set scenario]
At any time $t$, we call the scenario of the mixed-criticality task set $\tau$ basic if every $s_i^t \in s^t$ is a basic task scenario.
\end{definition}

\begin{definition}[Basically $\mathcal{S}$-Schedulable]
Let $\mathcal{S}$ be a scheduling policy, and $\tau$ a mixed-criticality task set. We say $\tau$ is basically $\mathcal{S}$-schedulable if, for \emph{any} basic scenario of $\tau$, $\mathcal{S}$ generates a feasible schedule.
\end{definition}

\begin{theorem}[Baruah \emph{et al.}~\cite{DBLP:journals/tc/BaruahBDLMMS12}]\label{theo:basicallySchedulable}
A mixed-criticality task set $\tau$ is MC-schedulable on a given platform $\pi$ if $\tau$ is basically $\mathcal{S}$-schedulable by a given scheduling policy $\mathcal{S}$.
\end{theorem}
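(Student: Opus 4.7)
The plan is to prove the contrapositive, or equivalently a ``domination'' argument: starting from any scenario $s^t$ of $\tau$, we construct a basic scenario $\hat{s}^t$ whose schedule under $\mathcal{S}$ dominates the schedule of $s^t$, so that if $\mathcal{S}$ produces a feasible schedule for every basic scenario, it must produce one for $s^t$ as well.

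First, I would define the rounding-up map on scenarios. For every completed job $J_{i,k}$ with actual execution time $c_{i,k}$, let $\ell_k$ be the smallest index in $[1, L_i]$ such that $c_{i,k} \leq C_i(\ell_k)$ (this exists because $c_{i,k} \leq C_i(L_i)$), and set $\hat{c}_{i,k} \eqdef C_i(\ell_k)$. For jobs that have not yet completed in $s^t$, use the budget $C_i(L_i)$. By construction $\hat{s}^t$ is a basic scenario in the sense of Definition~\ref{def:tasksetscenario}, and $\hat{c}_{i,k} \geq c_{i,k}$ for every job.

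Second, I would compare the two schedules produced by $\mathcal{S}$ on $s^t$ and $\hat{s}^t$ via a sustainability argument. Because $\mathcal{S}$ is global, fixed-priority, preemptive, and work-conservative, shrinking job execution times cannot delay the completion of any higher- or equal-priority job that had met its deadline in the inflated scenario (this is the standard sustainability property exploited in Pathan's analysis). I would then observe the following compatibility between criticality transitions in the two scenarios: whenever $s^t$ crosses the threshold $C_i(\ell)$ for some job $J_{i,k}$ (thus triggering $\texttt{IMCR}_{\ell+1}$), $\hat{s}^t$ has already crossed it at an earlier or equal instant. Hence for any $\ell$-interval $[t_a, t_b)$ arising in $s^t$, the corresponding interval in $\hat{s}^t$ is at criticality level at least $\ell$, and the set of tasks that $\mathcal{S}$ must dispatch during that interval is the same (the tasks of criticality $\geq \ell$).

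Finally, I would conclude: by basic $\mathcal{S}$-schedulability applied to $\hat{s}^t$, every job $J_{i,k}$ with $L_i \geq \ell$ that completes within an $\ell$-interval of $\hat{s}^t$ meets its deadline $d_{i,k}$. By the sustainability comparison, the corresponding job in $s^t$ completes no later than in $\hat{s}^t$, so it also meets $d_{i,k}$ within its $\ell$-interval of $s^t$. Since $s^t$ was arbitrary, $\mathcal{S}$ produces a feasible schedule for every scenario of $\tau$, i.e.\@ $\tau$ is $\mathcal{S}$-schedulable, and therefore MC-schedulable. The main obstacle is the second step: making rigorous the claim that reducing execution times cannot turn a feasible global fixed-priority schedule into an infeasible one, \emph{and simultaneously} that the criticality transitions in $\hat{s}^t$ occur no later than in $s^t$, so the two notions of ``$\ell$-interval'' line up in the right direction for the final domination argument.
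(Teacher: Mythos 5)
The paper itself gives no proof of this statement---it is imported verbatim from Baruah \emph{et al.}---so there is no in-paper argument to compare against; I can only judge your proposal on its own terms. Your overall strategy (round each completed execution time $c_{i,k}$ up to the least $C_i(\ell_k)$ dominating it, then argue that the real schedule is dominated by the schedule of the basic scenario) is indeed the right skeleton, and it is essentially the idea behind the cited result. But the step you yourself flag as ``the main obstacle'' is a genuine gap, not a technicality. The Ha--Liu predictability result you are implicitly invoking (reducing execution times under global preemptive fixed-job-priority scheduling cannot delay any completion) applies to a \emph{fixed} collection of jobs with fixed priorities. Here $\mathcal{S}$ is a mode-switching scheduler: the set of jobs it dispatches after time $t$ depends on which criticality thresholds have been crossed by time $t$, which in turn depends on the schedule itself. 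So you cannot apply predictability to ``the same policy $\mathcal{S}$'' on $s^t$ and $\hat{s}^t$ without first proving that the two runs agree on when each mode change occurs, and that is exactly what predictability was supposed to deliver---the argument is circular as written. Your directional claim is also suspect: if jobs in $s^t$ accumulate execution no more slowly than in $\hat{s}^t$, then $s^t$ crosses each threshold no \emph{later} than $\hat{s}^t$, the opposite of what you assert. A second, independent gap is your treatment of incomplete jobs: inflating them to $C_i(L_i)$ can push the basic scenario to a criticality level strictly higher than that of the actual scenario, so the $\ell$-intervals (and hence the set of tasks whose deadlines the feasibility definition protects) no longer line up between the two runs.

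The standard way out---and the one consistent with how the paper itself uses this theorem in its WCET-based transition protocol---is to notice that MC-schedulability only requires the \emph{existence} of some policy. Rather than comparing two runs of $\mathcal{S}$, construct a new run-time policy $\mathcal{S}'$ that \emph{simulates} $\mathcal{S}$ on the scenario in which every job of the current criticality level $\ell$ executes for exactly $C_i(\ell)$: whenever the simulation allocates a processor to $J_{i,k}$, execute $J_{i,k}$ if it still has remaining work and otherwise leave that allocation unused (or donate it). The processor-to-task allocation of $\mathcal{S}'$ on the real scenario is then \emph{identical} to that of $\mathcal{S}$ on a basic scenario, every job receives at least $c_{i,k}$ units by the time the simulation would have completed it, and the mode-change instants coincide by construction. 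This eliminates both the predictability circularity and the $\ell$-interval misalignment, at the cost of proving a weaker (but sufficient) statement than the one you aimed for.
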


The OPA procedure thus aims at determining whether a given mixed-criticality task set $\tau$ is basically $\mathsf{MSM}$-schedulable. Indeed, during the WCRT computation at criticality level $\ell$, $\forall \ell \in [1, L]$, the procedure considers each task $\tau_i$ will release jobs whose execution times will be equal to $\tau_i$'s WCET a some criticality level. The scenarios that are considered are thus basic. Consequently, if the OPA procedure succeeds, $\tau$ is deemed basically $\mathsf{MSM}$-schedulable, and according to Theorem~\ref{theo:basicallySchedulable}, $\tau$ is also MC-schedulable.\\
Throughout this section, we will consider that the priority of a task is given by its index, such that task $\tau_i$ has a higher priority than task $\tau_{i+1}$.

\subsection{Handling Increasing Mode Change Requests}\label{subsec:increasing}
The offline analysis stage, whose goal is to vouch for the feasibility of the considered mixed-criticality system, introduces two sources of pessimism:
\begin{compactitem}
	\item[$\bullet$] The WCET estimation introduces a level of pessimism which is as significant as the criticality of the considered task is high;
	\item[$\bullet$] The WCRT computation, as it is carried out in Section~\ref{sec:wcrt}, leads to defining an \emph{upper}-bound on the WCRT of each task, rather than the actual \emph{exact} WCRT.
\end{compactitem}
The introduction of these sources of pessimism leads to overestimate the actual requirements of the system. In particular, this means that during the execution of the system, a lot of computation resources could be wasted. In this section, we will show how to take advantage of the pessimism introduced during both the WCET and WCRT analysis stages by reclaiming these wasted resources, to allow the less critical tasks to complete their execution during the online scheduling phase.\\
Before diving in the details of our contributions, recall that the WCRT computation assumes that whenever the system switches from operating mode $M_\ell$ to operating mode $M_{\ell+1}$ at time $t_{\texttt{IMCR}_{\ell+1}}$, each task $\tau_i \in \tau$ such that $L_i \leq \ell$, do not interfere anymore on tasks $\tau_j \in \tau$, such that $L_j > \ell$, from time $t_{\texttt{IMCR}_{\ell+1}}$ onward. To prevent this interference when switching from operating mode $M_\ell$ to operating mode $M_{\ell+1}$, the mixed-criticality system model prescribes the immediate suspension of each task satisfying $L_i \leq \ell$, their potential rem-jobs being instantaneously dropped as soon as the mode change request $\texttt{IMCR}_{\ell+1}$ is triggered. As a result, if the main goal is to avoid jeopardizing the timeliness of the tasks the criticality of which is at least equal to $\ell+1$, the completion of the rem-jobs must be handled in such a way that no interference can occur on a task with a criticality at least equal to $\ell+1$. We now present our contributions, by suggesting three alternative approaches to avoid this interference.

\subsubsection{A Naive Approach}~\\
Upon an $\texttt{IMCR}_{\ell+1}$, an intuitive approach consists in assigning the rem-jobs the criticality of which is less than $\ell+1$ a priority that is \emph{lower} than the jobs the criticality of which is at least equal to $\ell+1$. The execution of the rem-jobs thus takes place whenever there are less than $m$ available jobs released by tasks $\tau_j\in\tau^{\ell+1}$. The following theorem proves this approach to be correct, i.e.\@ ($\ell+1$)-periodicity and ($\ell+1$)-feasibility is preserved among an $\texttt{IMCR}_{\ell+1}$.
\begin{theorem}
Upon an $\texttt{IMCR}_{\ell+1}$, assigning the rem-jobs of criticality level $\ell$ a priority that is lower than every task $\tau_i$ such that $L_i \geq \ell+1$ is a valid increasing transition protocol.
\end{theorem}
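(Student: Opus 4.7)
The plan is to verify directly that the two conditions defining a valid increasing transition protocol, namely $(\ell+1)$-periodicity and $(\ell+1)$-feasibility, both hold under the proposed priority demotion. The $(\ell+1)$-periodicity is immediate: every task $\tau_j$ with $L_j \geq \ell+1$ remains enabled after $t_{\texttt{IMCR}_{\ell+1}}$, and its release pattern is entirely untouched by the priority assignment given to the rem-jobs. So only $(\ell+1)$-feasibility requires real argument.

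My main strategy for feasibility is to show that, from the viewpoint of any task in $\tau^{\ell+1}$, the schedule produced by $\mathsf{MSM}$ under the proposed protocol is \emph{pointwise identical} to the schedule produced by the traditional state-of-the-art protocol in which rem-jobs are instantaneously dropped at $t_{\texttt{IMCR}_{\ell+1}}$. First I would fix an arbitrary time instant $t \geq t_{\texttt{IMCR}_{\ell+1}}$ and let $k$ denote the number of jobs released by tasks in $\tau^{\ell+1}$ that are available at time $t$. If $k \geq m$, then under both protocols the $m$ processors are busy executing the $m$ highest-priority jobs from $\tau^{\ell+1}$: under the traditional approach because no lower-criticality jobs exist, and under the proposed approach because every rem-job has priority strictly lower than every job in $\tau^{\ell+1}$. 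If $k < m$, then under both protocols all $k$ jobs from $\tau^{\ell+1}$ are dispatched (there are enough processors, and they have higher priority than the rem-jobs); the remaining $m-k$ processors are either idle or executing rem-jobs, but in neither situation does a rem-job occupy a slot that would be filled by a job in $\tau^{\ell+1}$.

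Having established that the two schedules agree on every job released by $\tau^{\ell+1}$, I would then invoke the offline certification machinery. The OPA procedure of Section~\ref{sec:audsley}, together with the WCRT bound of Section~\ref{sec:wcrt}, guarantees that under the traditional approach $R_j(\ell+1) \leq D_j$ for every $\tau_j \in \tau^{\ell+1}$, so every such job meets its deadline in the discard-based schedule. Since the proposed protocol produces the same execution pattern for jobs of $\tau^{\ell+1}$, these deadlines are likewise met, which yields $(\ell+1)$-feasibility and concludes the proof.

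The only step that needs care is the pointwise schedule-equivalence argument, because $\mathsf{MSM}$ is work-conservative and one might worry that executing a rem-job in place of an idle processor somehow perturbs later dispatching decisions. The resolution is that the static-priority, global, preemptive nature of $\mathsf{MSM}$ makes dispatching decisions depend only on the set of currently available jobs and their relative priorities; since rem-jobs sit below every job in $\tau^{\ell+1}$, they can never be selected over an available job of $\tau^{\ell+1}$, and hence the timing of every job in $\tau^{\ell+1}$ is preserved. I would make this explicit by a short induction on the event times (job releases and completions) after $t_{\texttt{IMCR}_{\ell+1}}$.
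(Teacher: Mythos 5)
Your proposal is correct and rests on exactly the same idea as the paper's (one-sentence) proof: since the rem-jobs are assigned a priority strictly below every task in $\tau^{\ell+1}$, they cannot interfere with those tasks, so the $(\ell+1)$-periodicity and $(\ell+1)$-feasibility established by the offline analysis carry over unchanged. Your pointwise schedule-equivalence argument is simply a more detailed, rigorous elaboration of that same non-interference observation.
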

\begin{IEEEproof}
By assigning the rem-jobs a priority that is lower than every task $\tau_i$ such that $L_i \geq \ell+1$, they cannot interfere on the tasks $\tau_j\in\tau^{\ell+1}$. It follows that the $(\ell+1)$-periodicity and $(\ell+1)$-feasibility are preserved, and that this increasing transition protocol is valid.
\end{IEEEproof}

\subsubsection{A WCET-Based Approach}\label{sec:improvedApproach}~\\
The drawback of the naive approach lies in the fact that upon an $\texttt{IMCR}_{\ell+1}$, the completion of the rem-jobs having a criticality equal to $\ell$ can only occur whenever there are strictly less than $m$ available jobs of criticality at least equal to $\ell+1$. As a consequence, the rem-jobs could complete their execution well beyond their deadline. We can however take advantage of the pessimism introduced during the WCET analysis to allow the rem-jobs to complete their execution earlier. Indeed, recall that the offline analysis vouched for the feasibility of any basic scenario of the system. More precisely, and considering that the system is currently executing in operating mode $M_\ell$, this means that the scheduler will make the assumption that every job $J_{i,k}$ will complete its execution after exactly $c_{i,k} = C_i(\ell)$ time units. Nevertheless, the fact is that $J_{i,k}$ could complete its execution after an amount of time $c_{i,k}$ such that $c_{i,k} \leq C_i(\ell)$, in any scenario of criticality $\ell$. Let us therefore denote by $\texttt{ur}_{i,k}(\ell) \eqdef C_i(\ell) - c_{i,k}$ the unused resources of $J_{i,k}$ in any scenario of criticality level $\ell$. The value $\texttt{ur}_{i,k}(\ell)$ thus represents a fraction of the resources that were reserved by the scheduler for job $J_{i,k}$, under the assumption of a basic scenario of criticality $\ell$, and that $J_{i,k}$ did not consume. \\
Following from the above discussion, we propose an alternative increasing transition protocol, that slightly adapts the $\mathsf{MSM}$ scheduler during the transition phase. Indeed, the latter would act as if $J_{i,k}$ did not complete its execution after $c_{i,k}$ time units, and take advantage of the unused resources to carry on the execution of the rem-jobs during exactly $\texttt{ur}_{i,k}(\ell)$ time units. Upon an $\texttt{IMCR}_{\ell+1}$, recovering these unused resources to carry on the execution of the rem-jobs of criticality equal to $\ell$ will preserve both the ($\ell+1$)-periodicity and ($\ell+1$)-feasibility, as proved by the following theorem.
\begin{theorem}
Upon an $\texttt{IMCR}_{\ell+1}$, the protocol that reclaims the unused resources $\texttt{ur}_{i,k}(\ell+1)$ of every job $J_{i,k}$ of criticality at least equal to $\ell+1$, to complete the execution of the rem-jobs of criticality $\ell$, is a valid increasing transition protocol.
\end{theorem}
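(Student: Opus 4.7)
The plan is to decompose the claim into its two obligations: $(\ell{+}1)$-periodicity and $(\ell{+}1)$-feasibility. Periodicity is immediate, since the protocol only re-routes already-reserved CPU time to rem-jobs and never alters the release pattern of any task $\tau_i$ with $L_i \ge \ell+1$. All of the work will therefore go into showing $(\ell{+}1)$-feasibility, and the strategy is to show that, from the point of view of every task $\tau_j$ with $L_j \ge \ell+1$, the response-time bound proved in Section~\ref{sec:wcrt} at criticality level $\ell+1$ remains a valid upper bound when the $\mathsf{MSM}$ schedule is modified by the proposed reclamation rule.

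First I would make precise what ``reclaiming $\texttt{ur}_{i,k}(\ell+1)$'' means at the schedule level: for every job $J_{i,k}$ with $L_i \ge \ell+1$ that completes after $c_{i,k} \le C_i(\ell+1)$ time units, the processor on which $J_{i,k}$ ran is kept executing rem-jobs until an additional $\texttt{ur}_{i,k}(\ell+1) = C_i(\ell+1) - c_{i,k}$ time units have been spent, and only then becomes available to dispatch other work of criticality $\ge \ell+1$. In other words, the modified schedule is obtained from the $\mathsf{MSM}$ schedule by substituting, inside the CPU slot of size $C_i(\ell+1)$ reserved for $J_{i,k}$, the real execution of $J_{i,k}$ followed by a filler of rem-jobs, with total length exactly $C_i(\ell+1)$.

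Next I would fix an arbitrary task $\tau_j \in \tau^{\ell+1}$ and an arbitrary window of length $\Delta$ starting at the release of some job $J_{j,q}$, and compare the actual interfering workload inflicted on $J_{j,q}$ under the modified schedule with the bound $\upinter_j(\Delta,\ell+1)$ of Equation~\eqref{eq:upperBoundTotIntWorkload}. The key point is that every time unit during which a processor is not available to $J_{j,q}$ is either (i) a time unit actually consumed by a higher-priority job $J_{i,k}$ of criticality $\ge \ell+1$, or (ii) a time unit consumed by a rem-job of criticality $\ell$ that is being run as filler inside the reserved slot of some such $J_{i,k}$. By construction, the total duration of (i) plus (ii) for a given $(i,k)$ is at most $C_i(\ell+1)$, which is exactly the per-job contribution the WCRT analysis charged to $\tau_i$ at criticality level $\ell+1$. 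Summing over higher-priority tasks and applying Property~\ref{property:guan} to cap the number of carry-in tasks at $m-1$ yields the inequality $\inter_j^*(\Delta,\ell+1) \le \upinter_j(\Delta,\ell+1)$ under the modified schedule, so the fixed point equation defining $R_j(\ell+1)$ remains an upper bound on the actual response time of $J_{j,q}$. Since the OPA procedure succeeded at criticality level $\ell+1$, we have $R_j(\ell+1) \le D_j$, and $(\ell{+}1)$-feasibility follows.

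The main obstacle I expect is the per-processor bookkeeping in the previous paragraph: one has to be convinced that the substitution of rem-jobs inside a high-criticality reservation really does not raise the parallel interference seen by $\tau_j$ beyond what the bound already charges. The cleanest way to handle this is to observe that the modified schedule coincides, on every processor and at every instant, with a hypothetical schedule in which each $J_{i,k}$ of criticality $\ge \ell+1$ executes for its full reserved length $C_i(\ell+1)$; that hypothetical schedule is exactly one of the basic scenarios analysed offline, and $\tau$ is basically $\mathsf{MSM}$-schedulable by hypothesis, so $J_{j,q}$ meets its deadline there, hence also in the modified schedule. Invoking Theorem~\ref{theo:basicallySchedulable} to lift from basic scenarios to arbitrary scenarios then closes the argument.
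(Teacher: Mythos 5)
Your proof is correct and rests on the same key observation as the paper's: reclaiming $\texttt{ur}_{i,k}(\ell+1)$ makes the schedule indistinguishable (to every task of criticality at least $\ell+1$) from the basic scenario in which each such job executes for exactly $C_i(\ell+1)$, and that scenario was already certified feasible offline since $\tau$ is basically $\mathsf{MSM}$-schedulable. The paper states this in three sentences; your window-by-window interference accounting is a more detailed elaboration of the same idea, and your closing paragraph is essentially the paper's proof verbatim.
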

\begin{IEEEproof}
The proof relies on the observation that reclaiming exactly $\texttt{ur}_{i,k}(\ell+1)$ units of computation from task $\tau_i$ consists in simulating a basic task scenario for $\tau_i$. Indeed, $c_{i,k} + \texttt{ur}_{i,k}(\ell+1) = C_i(\ell+1)$. Since the system was deemed $\mathsf{MSM}$-schedulable, it follows that any basic scenario is feasible. As a consequence, the proposed increasing transition protocol is valid.
\end{IEEEproof}

\subsubsection{A WCRT-Based Approach}\label{sec:improvedApproach}~\\
As explained in Section~\ref{sec:wcrt}, an upper-bound on the WCRT $R_i(\ell)$ of task $\tau_i\in\tau$ at criticality level $\ell$ is computed assuming an upper-bound on the interference suffered by $\tau_i$ in the most pessimistic scenario of criticality level $\ell$. Furthermore, in the most pessimistic scenario of criticality level $\ell$, the job $J_{i,k}$ should execute for exactly $C_i(\ell)$ which is also an upper-bound on the execution time of task $\tau_i$. Due to this twofold source of pessimism, it is consequently most unlikely that a job $J_{i,k}$ released by task $\tau_i$ will ever complete its execution exactly $R_i(\ell)$ time units after $r_{i,k}$. However, the system having been deemed $\mathsf{MSM}$-schedulable, it follows that the timeliness of each of its tasks could be respected in presence of this pessimism. In this section, we will thus suggest an increasing transition protocol which will enable for the reclaiming of unused resources, this time being based on the WCRT of each task. Our approach is based on the fact that the offline analysis assumed that a job $J_{i,k}$ could complete its execution $R_i(\ell)$ time units after $r_{i,k}$ in any scenario of criticality level $\ell$. If, however, $J_{i,k}$ completes its execution earlier, the scheduler will simulate the availability of $J_{i,k}$ until time $r_{i,k}+R_i(\ell)$, by assigning a processor to the rem-jobs whenever $J_{i,k}$ would have been executed. The challenge of this section is to prove that this will not increase the WCRT of tasks having a lower priority.
The following theorem formally proves that if every job released by tasks $\tau_j\in\texttt{hp}(\tau_i)$ completes its execution no later than $R_{j}(\ell)$ time units after its release, then every job released by task $\tau_i$ will complete its execution no later than $R_i(\ell)$ time units after its release as well, in any scenario of criticality level $\ell$.

\begin{theorem}\label{theo:mainTheo}
Each job $J_{i,k}$ released by a task $\tau_i$ will complete its execution no later than $R_i(\ell)$ time units after its release in any scenario of criticality level $\ell \leq L_i$, if for every task $\tau_j\in\texttt{hp}(\tau_i)$, every job released by a task $\tau_j$ completes its execution no later than $R_j(\ell)$ time units after its release.
\end{theorem}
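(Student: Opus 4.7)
My plan is to argue by contradiction. Assume that $f_{i,k} > r_{i,k}+R_i(\ell)$ for some job $J_{i,k}$ of $\tau_i$ in a scenario of criticality $\ell$, and consider the window $W = [r_{i,k}, r_{i,k}+R_i(\ell))$. Since $J_{i,k}$ is available throughout $W$, has execution time $c_{i,k} \leq C_i(\ell)$, and $\mathsf{MSM}$ is work-conservative, every instant of $W$ at which $J_{i,k}$ is not dispatched must correspond to all $m$ processors being busy with higher-priority work -- which, under the WCRT-based modified scheduler, bundles together both actual executions of jobs released by tasks $\tau_j \in \texttt{hp}(\tau_i)$ and rem-jobs dispatched in the simulated-availability slots of such jobs. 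Substituting this into the fixed-point equation $R_i(\ell) = C_i(\ell) + \lfloor \upinter_i(R_i(\ell), \ell)/m \rfloor$ of Section~\ref{sec:wcrt} would force the cumulative higher-priority workload in $W$ to strictly exceed $\upinter_i(R_i(\ell), \ell)$.

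The heart of the proof is then to establish the reverse inequality: that the cumulative higher-priority workload in $W$ is at most $\upinter_i(R_i(\ell), \ell)$. For each $\tau_j \in \texttt{hp}(\tau_i)$ I would show that the occupancy of $\tau_j$'s priority slot in $W$ (actual $\tau_j$-executions plus rem-jobs filling in for them) is bounded, per job, by $C_j(\ell)$, exactly as in the offline derivation of Section~\ref{sec:wcrt}. The hypothesis that every job of every $\tau_j \in \texttt{hp}(\tau_i)$ completes within $R_j(\ell)$ of its release plays two roles here: first, it caps the number of $\tau_j$-jobs whose simulated-availability intervals intersect $W$ at the number already accounted for by $\nctiw_{j,i}(R_i(\ell), \ell)$ and $\difftiw_{j,i}(R_i(\ell), \ell)$ via the carry-in / non-carry-in decomposition of Definition~\ref{def:IntWorkload}; second, it guarantees that the fill-in slots of each $J_{j,k'}$ remain confined to $[r_{j,k'}, r_{j,k'}+R_j(\ell))$. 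Summing the per-task bounds and invoking Property~\ref{property:guan} to limit the carry-in contribution to at most $m-1$ tasks then reproduces Equation~\ref{eq:upperBoundTotIntWorkload}, giving the desired bound $\upinter_i(R_i(\ell), \ell)$ and therefore the contradiction.

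The main obstacle will be making the per-job slot-occupancy bound watertight. The reading of ``the scheduler assigns a processor to the rem-jobs whenever $J_{j,k'}$ would have been executed'' that makes the theorem go through is the one in which these fill-ins shadow exactly the dispatches $J_{j,k'}$ would have received in the worst-case basic scenario at criticality $\ell$, so that the total dispatched time in $\tau_j$'s slot per job equals $C_j(\ell)$, rather than an unconstrained occupation of the entire response-time window of length $R_j(\ell) \geq C_j(\ell)$, which would evidently inflate the interference seen by $\tau_i$. The hypothesis on higher-priority WCRTs is precisely what makes this interpretation well-defined, since the simulated-availability of each $J_{j,k'}$ must stay inside $[r_{j,k'}, r_{j,k'}+R_j(\ell))$ for the per-job count to remain consistent with the offline interference formula.
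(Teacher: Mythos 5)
Your plan is sound and arrives at the same pivotal fact as the paper --- that the actual total interfering workload $\inter_{i}^*(\Delta,\ell)$ over the response-time window never exceeds the offline bound $\upinter_i(\Delta,\ell)$ of Equation~\ref{eq:upperBoundTotIntWorkload} --- but by a genuinely different route. The paper argues by induction on the priority order: assuming the claim for $\tau_1,\dots,\tau_{i-1}$, it writes the workload suffered by $\tau_i$ as the workload suffered by $\tau_{i-1}$ plus the contribution of $\tau_{i-1}$ itself, and splits into two cases according to whether at most $m-2$ or at least $m-1$ tasks of $\texttt{hp}(\tau_{i-1})$ are carry-in, folding the extra $\difftiw_{i-1,i}(\Delta,\ell)$ or $\nctiw_{i-1,i}(\Delta,\ell)$ term into the $\texttt{hp}_{m-1}(\tau_i)$ sum. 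You instead run a direct busy-window/contradiction argument on $[r_{i,k},r_{i,k}+R_i(\ell))$, re-deriving the bound task by task from the carry-in/non-carry-in decomposition and Property~\ref{property:guan}, and you make explicit the closing step (workload bound plus the fixed-point equation forces completion by $R_i(\ell)$) that the paper leaves implicit after Equation~\ref{eq:case1:step5}. Your route buys a self-contained argument in the style of the underlying analyses of Pathan and Guan, and it additionally confronts the simulated-availability slots head-on: you correctly isolate the point the paper defers to its later protocol-validity theorem, namely that the fill-in granted to rem-jobs must keep the per-job occupancy of each $\tau_j\in\texttt{hp}(\tau_i)$ consistent with what $\nctiw_{j,i}$ and $\citiw_{j,i}$ charged, and confined to $[r_{j,k'},r_{j,k'}+R_j(\ell))$, since an unconstrained occupation of the whole response window would not be covered by the offline bound; the paper's induction, by contrast, buys locality (only the step from $\tau_{i-1}$ to $\tau_i$ needs new bookkeeping) but never engages with that interpretation issue. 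One caveat applies equally to both proofs: asserting that at most $m-1$ tasks of $\texttt{hp}(\tau_i)$ act as carry-in in the actual window really rests on Guan's window-extension argument hidden inside Property~\ref{property:guan}, which neither you nor the paper spells out, so your appeal to it is no weaker than the paper's own.
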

\begin{IEEEproof}
The proof is by induction on the priorities of the tasks. We will indeed show that if the property holds for every task up to $\tau_{i-1}$, then it must also hold for task $\tau_i$.\\
\textbf{Base case:} The base case consists in considering task $\tau_1$, i.e.\@ the task having the highest priority. Since $\tau_1$ suffers no interference at all, it is trivial that every job released by $\tau_1$ will complete its execution no later than $R_1(\ell)=C_1(\ell)$ time units after its release, in any scenario of criticality level $\ell$.\\
\textbf{Induction step:} Assume that the property holds for every task up to $\tau_{i-1}$ and let us show that it must also hold for $\tau_i$, by considering the execution of any job $J_{i,k}$ released by $\tau_i$ over an interval of length $\Delta$. We will distinguish between two cases:\\
\textbf{Case 1:} let us first consider that there are no more than $m-2$ carry-in tasks belonging to $\texttt{hp}(i-1)$ over the interval of length $\Delta$. From Equation~\ref{eq:upperBoundTotIntWorkload}, according to the inductive hypothesis and since there are at most $m-2$ carry-in tasks belonging to $\texttt{hp}(\tau_{i-1})$, the actual total interfering workload suffered by task $\tau_{i-1}$ over the interval of length $\Delta$ can be expressed as follows:
			\begin{multline}\label{eq:case1:step1}
				\inter_{i-1}^*(\Delta,\ell) \leq \sum\limits_{\tau_j\in\texttt{hp}(\tau_{i-1})}\nctiw_{j,i-1}(\Delta,\ell) \\ 
					+ \sum_{\tau_j\in\texttt{hp}_{m-2}(\tau_{i-1})}\difftiw_{j,i-1}(\Delta,\ell)
			\end{multline}
			Because carry-in tasks have a higher interfering workload than non-carry-in tasks, an upper-bound on the interfering workload cause by task $\tau_{i-1}$ on task $\tau_i$ over the interval of length $\Delta$ is given by the following equation:
			\begin{equation}\label{eq:case1:step2}
				\inter_{i-1,i}^*(\Delta, \ell) \leq \nctiw_{i-1,i}(\Delta, \ell) + \difftiw_{i-1,i}(\Delta, \ell)
			\end{equation}
			Since the total interfering workload suffered by $\tau_i$ is equal to the total interfering workload suffered by $\tau_{i-1}$, plus the interfering workload suffered by $\tau_i$ from $\tau_{i-1}$, using Equations~\ref{eq:case1:step1} and~\ref{eq:case1:step2}, the actual total interfering workload suffered by job $J_{i,k}$ over the interval of length $\Delta$ can be upper-bounded as follows:
			\begin{equation}
				\inter_i^*(\Delta, \ell) \leq \inter_{i-1}^*(\Delta,\ell) + \nctiw_{i,i-1}(\Delta, \ell) + \difftiw_{i,i-1}(\Delta,\ell)
			\end{equation}
			However:
			\begin{multline}
				\difftiw_{i,i-1}(\Delta,\ell) + \sum_{\tau_j\in\texttt{hp}_{m-2}(\tau_{i-1})} \difftiw_{j,i-1}(\Delta,\ell) \\
					\leq \sum_{\tau_j\in\texttt{hp}_{m-1}(\tau_{i})} \difftiw_{j,i}(\Delta,\ell)
			\end{multline}
			Using Equation~\ref{eq:upperBoundTotIntWorkload}, we are able to conclude that:
			\begin{multline}\label{eq:case1:step5}
				\inter_i^*(\Delta,\ell) \leq \sum_{\tau_j\in\texttt{hp}(\tau_i)} \nctiw_{j,i}(\Delta,\ell) \\
					+ \sum_{\tau_j\in\texttt{hp}_{m-1}(\tau_{i})} \difftiw_{j,i}(\Delta,\ell) = \upinter_i(\Delta, \ell)
			\end{multline}
			Equation~\ref{eq:case1:step5} proves that the actual total interfering workload suffered by $J_{i,k}$ will not exceed $\upinter_i(\Delta, \ell)$, thus implying that $J_{i,k}$ will complete its execution no later than $R_i(\ell)$ time units after $r_{i,k}$.\\
\textbf{Case 2:} let us now consider the case where there are \emph{at least} $m-1$ carry-in tasks belonging to $\texttt{hp}(i-1)$ over the interval of length $\Delta$. From Property~\ref{property:guan}, since the total interfering workload is upper-bounded when considering at most $m-1$ carry-in tasks, $\tau_{i-1}$ can be considered as a non carry-in task. An upper-bound on the interfering workload suffered by $J_{i,k}$ from task $\tau_{i-1}$ is thus given by:
			\begin{equation*}
				\upinter_{i-1,i}(\Delta,\ell) \leq \nctiw_{i-1,i}(\Delta,\ell)
			\end{equation*}
		As a consequence, the actual total interfering workload suffered by job $J_{i,k}$ over the interval of length $\Delta$, can be expressed as:
			\begin{equation}\label{eq:step2}
				\inter_i^*(\Delta,\ell) \leq \inter_{i-1}^*(\Delta,\ell) + \nctiw_{i-1,i}(\Delta,\ell)
			\end{equation}
		According to the inductive hypothesis, we have:
			\begin{multline}\label{eq:step3}
				\inter_i^*(\Delta,\ell) \leq \sum\limits_{\tau_j\in\texttt{hp}(\tau_i)}\nctiw_{j,i}(\Delta,\ell) \\
					+ \sum_{\tau_j\in\texttt{hp}_{m-1}(\tau_{i-1})}\difftiw_{j,i}(\Delta,\ell) 
			\end{multline}
		However, since $\texttt{hp}(\tau_{i-1}) \subset \texttt{hp}(\tau_i)$, we have:
			\begin{equation}\label{eq:step4}
				\sum_{\tau_j\in\texttt{hp}_{m-1}(\tau_{i-1})}\difftiw_{j,i}(\Delta,\ell) \leq \sum_{\tau_j\in\texttt{hp}_{m-1}(\tau_{i})}\difftiw_{j,i}(\Delta,\ell)
			\end{equation}\label{eq:case2:step5}
		By injecting Equation~\ref{eq:step4} into Equation~\ref{eq:step3}, and by using Equation~\ref{eq:upperBoundTotIntWorkload} we get the following upper-bound on the actual total interfering workload suffered by $J_{i,k}$:
			\begin{multline}
				\inter_i^*(\Delta,\ell) \leq \sum\limits_{\tau_j\in\texttt{hp}(\tau_i)}\nctiw_{j,i}(\Delta,\ell) \\
					+ \sum_{\tau_j\in\texttt{hp}_{m-1}(\tau_{i})}\difftiw_{j,i}(\ell) = \upinter_i(\Delta,\ell)
			\end{multline}
		Equation~\ref{eq:case2:step5} proves that the actual total interfering workload suffered by $J_{i,k}$ will not exceed $\upinter_i(\Delta, \ell)$, implying that $J_{i,k}$ will complete its execution after no more than $R_i(\ell)$ time units.
\end{IEEEproof}

\begin{corollary}\label{corollary:main}
Each job $J_{i,k}$ released by a task $\tau_i$ will complete its execution no later than $R_i(\ell)$ time units after $r_{i,k}$ in any scenario of criticality level $\ell \leq L_i$, if every job $J_{j,p}$ released by a task $\tau_j \in \texttt{hp}(\tau_i)$ completes its execution exactly $R_j(\ell)$ time units after $r_{j,p}$.
\end{corollary}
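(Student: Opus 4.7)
The plan is to derive this corollary as an immediate specialization of Theorem~\ref{theo:mainTheo}. The hypothesis assumed here---that every job $J_{j,p}$ released by a higher-priority task $\tau_j\in\texttt{hp}(\tau_i)$ completes its execution \emph{exactly} $R_j(\ell)$ time units after $r_{j,p}$---is strictly stronger than the hypothesis of Theorem~\ref{theo:mainTheo}, which only requires completion \emph{no later than} $R_j(\ell)$ time units after $r_{j,p}$. Since finishing ``exactly at $R_j(\ell)$'' trivially implies finishing ``no later than $R_j(\ell)$'', the premise of Theorem~\ref{theo:mainTheo} is satisfied, and its conclusion, applied verbatim, yields the statement of the corollary. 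No inductive argument, no new workload bookkeeping, and no additional case split on the number of carry-in tasks is required: everything is already carried by Theorem~\ref{theo:mainTheo}.

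There is therefore no real mathematical obstacle; the substantive content of the corollary is conceptual rather than technical. What it actually asserts is that the response-time bounds derived in Section~\ref{sec:wcrt} remain valid even when the scheduler deliberately \emph{simulates} each higher-priority job as if it occupied its entire $R_j(\ell)$-long response-time window, regardless of when the job actually finishes. This is precisely the property needed to legitimize the WCRT-based transition protocol of Section~\ref{sec:improvedApproach}: the slack between a high-priority job's real completion and its worst-case completion boundary can be reclaimed to continue executing rem-jobs of lower criticality, and the corollary certifies that this simulation still preserves $R_i(\ell)$ as an upper bound on the response time of every lower-priority task $\tau_i$, hence the $(\ell+1)$-periodicity and $(\ell+1)$-feasibility of the new mode.
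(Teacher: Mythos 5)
Your specialization argument is exactly right and matches the paper, which states this corollary without a separate proof as an immediate consequence of Theorem~\ref{theo:mainTheo}: since finishing \emph{exactly} at $r_{j,p}+R_j(\ell)$ implies finishing \emph{no later than} $r_{j,p}+R_j(\ell)$, the theorem's hypothesis is satisfied and its conclusion is verbatim the corollary. Your closing remarks on the corollary's role in legitimizing the WCRT-based reclaiming protocol also reflect how the paper subsequently uses it.
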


From Corollary~\ref{corollary:main}, we propose an alternative increasing transition protocol, that slightly adapts the $\mathsf{MSM}$ scheduler during the transition phase resulting from the triggering of $\texttt{IMCR}_{\ell+1}$. Upon the completion of a job $J_{i,k}$ released by a task $\tau_i$ such that $L_i\geq\ell+1$ at time $f_{i,k} < r_{i,k}+R_i(\ell+1)$, the scheduler would simulate the availability of $J_{i,k}$ in the interval $[f_{i,k}, r_{i,k}+R_i(\ell+1)]$, thus acting as if $J_{i,k}$ had suffered an interference equal to the upper-bound $\mathcal{I}_i^\ell(R_i(\ell))$, and had completed its execution exactly at time $r_{i,k}+R_i(\ell)$. The following theorem proves that this approach will preserve both the ($\ell+1$)-periodicity and ($\ell+1$)-feasibility.

\begin{theorem}
Upon an $\texttt{IMCR}_{\ell+1}$, the protocol that simulates the availability of each job $J_{i,k}$ such that $J_i\geq\ell+1$ until time $r_{i,k}+R_i(\ell+1)$, to complete the execution of the rem-jobs of criticality $\ell$, is a valid increasing transition protocol.
\end{theorem}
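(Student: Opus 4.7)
The plan is to decompose the proof into the two obligations of a valid increasing transition protocol, namely $(\ell{+}1)$-periodicity and $(\ell{+}1)$-feasibility, and then discharge each in turn. The $(\ell{+}1)$-periodicity obligation is essentially immediate: the protocol never modifies the release pattern of tasks $\tau_j \in \tau^{\ell+1}$, nor does it suspend any of them; it only manipulates what the scheduler does during the idle interval $[f_{i,k},\, r_{i,k}+R_i(\ell+1)]$ following the completion of a high-criticality job. Hence each task $\tau_j \in \tau^{\ell+1}$ continues to release jobs according to its sporadic activation pattern, which is exactly what $(\ell{+}1)$-periodicity demands.

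The heart of the argument is $(\ell{+}1)$-feasibility, and the plan here is to reduce it directly to Corollary~\ref{corollary:main}. The key observation is that, from the point of view of any task $\tau_i \in \tau^{\ell+1}$, the protocol makes every higher-priority job $J_{j,p}$ (with $\tau_j \in \texttt{hp}(\tau_i) \cap \tau^{\ell+1}$) behave \emph{as if} it had consumed processor time continuously until exactly $r_{j,p}+R_j(\ell+1)$: whenever $J_{j,p}$ actually finishes early, the freed processor is handed to a rem-job, and from $\tau_i$'s perspective this is indistinguishable from $J_{j,p}$ still executing. Thus the hypothesis of Corollary~\ref{corollary:main} is satisfied verbatim at criticality level $\ell+1$, and the corollary yields that every job $J_{i,k}$ released by $\tau_i$ completes within $R_i(\ell+1)$ time units of $r_{i,k}$. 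Since the task set was deemed basically $\mathsf{MSM}$-schedulable by the OPA procedure, we have $R_i(\ell+1) \leq D_i$ for every $\tau_i \in \tau^{\ell+1}$, so all deadlines of criticality at least $\ell{+}1$ are met.

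The step I expect to require the most care is justifying that the rem-jobs, when they execute during these simulated-availability intervals, do not themselves create interference that is unaccounted for by the WCRT analysis. The plan is to argue this by induction on priority, mirroring the structure of Theorem~\ref{theo:mainTheo}: a rem-job can only occupy a processor at time $t$ if at time $t$ there exists some higher-priority $\tau_j \in \tau^{\ell+1}$ whose current job is being simulated as still available, which by the simulation itself means the processor was already accounted for as busy with a higher-priority task in the worst-case interference bound $\upinter_i(\Delta,\ell+1)$. Consequently the rem-job strictly substitutes for assumed higher-priority work and contributes nothing to $\inter_i^{*}(\Delta,\ell+1)$ beyond what is already captured. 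Combining both pieces yields that the protocol preserves $(\ell{+}1)$-periodicity and $(\ell{+}1)$-feasibility, and is therefore a valid increasing transition protocol.
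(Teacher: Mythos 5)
Your proof is correct and follows essentially the same route as the paper's: both reduce $(\ell{+}1)$-feasibility to Corollary~\ref{corollary:main} (and hence Theorem~\ref{theo:mainTheo}) by observing that simulating the availability of each completed high-criticality job until $r_{i,k}+R_i(\ell+1)$ reproduces exactly the worst-case interference already assumed in the WCRT analysis, so lower-priority high-criticality tasks still finish within their response-time bounds. Your explicit handling of $(\ell{+}1)$-periodicity and of the fact that rem-jobs merely substitute for already-accounted-for higher-priority work fills in details the paper leaves implicit, but does not change the underlying argument.
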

\begin{IEEEproof}
This is an immediate consequence of Theorem~\ref{theo:mainTheo}. Indeed, by simulating the availability of each job $J_{i,k}$ released by task $\tau_i$ until time $r_{i,k}+R_i(\ell+1)$, we simulate the upper-bound on the interference suffered by $J_{i,k}$. According to Corollary~\ref{corollary:main}, this will not increase the WCRT of tasks $\tau_j\in\texttt{lp}(\tau_i)$. As a consequence, the proposed increasing transition protocol is valid.
\end{IEEEproof}

\subsubsection{Conclusion and Discussion}~\\
In this section, we have proposed several approaches to handle the transition from criticality level $\ell$ to criticality level $\ell+1$. The proposed approaches have a common goal, as they all allow to reach a compromise between a high safety, which is implied by the high level of pessimism adopted during the offline analysis phase, and an efficient usage of the platform, by recovering the unused resources to execute the less critical tasks during the online phase. In our opinion, this is an important aspect of our work, as it highlights the crucial fact that a high level of assurance does not necessarily imply that resources are doomed to be wasted.\\
Finally, note that the proposed approaches do not assume anything on the execution order of the rem-jobs. We will thus briefly discuss what management policy can be implemented in the case the system switches from operating mode $M_\ell$ to operating mode $M_{\ell+1}$ at time $t_{\texttt{IMCR}_{\ell+1}}$ while some rem-jobs of criticality equal to $\ell-1$ are still available. This means that at time $t_{\texttt{IMCR}_{\ell+1}}$, the system has to complete the execution of rem-jobs the criticality of which is less than $\ell$. In that case, either it is assumed that since their criticality is less than the current criticality of the system, they are all equal in terms of importance, meaning that a global strategy can be applied to complete their execution (complete the rem-jobs with the shortest deadline first, complete the rem-jobs with the shortest remaining processing time first, etc.). Or we can assume that even though the current criticality of the system is higher than their own criticality, the relative importance of the rem-jobs remains the same, meaning it is safer to complete the rem-jobs in decreasing order of criticality level (the rem-jobs with a criticality equal to $\ell$ have to be completed before the rem-jobs with a criticality equal to $\ell-1$).

\subsection{Handling Decreasing Mode Change Requests}\label{subsec:decreasing}
In our previous work~\cite{Santy:2012:RMS:2354411.2355225}, we focused on \emph{uni}processor platforms, and proved that whenever an idle time was detected, while the system had reached criticality level $\ell > 1$, it was safe to re-enable \emph{every} task that had previously been suspended. However, the occurrence of a simultaneous idle time on every processors of a multiprocessor platform is unlikely. In this section, we will therefore suggest an alternative approach to re-enable the suspended tasks without relying on the occurrence of idle times. Before going any further, let us introduce the following property, which highlights two important requirements regarding task re-enablement. \\
\begin{property}\label{prop:dmcr:safe}
When the system is switching from operating mode $M_{h}$ to operating mode $M_{\ell}$ upon a $\texttt{DMCR}_{\ell}$ at time $t_{\texttt{DMCR}_{\ell}}$, the re-enablement of a previously suspended task $\tau_i$ belonging to $M_{\ell}$ at time $t \geq t_{\texttt{DMCR}_{\ell}}$ may be carried out provided that the following two conditions hold:
\begin{compactitem}
	\item[$\bullet$] The re-enablement at time $t$ of task $\tau_i$ must not jeopardize the $h$-periodicity and $h$-feasibility of the system;
	\item[$\bullet$] Task $\tau_i$ must be guaranteed to meet its deadline from time $t$ onward if no $\texttt{IMCR}_{\ell+1}$ occurs at a time $t_{\texttt{IMCR}_{\ell+1}} \geq t$.
\end{compactitem}
\end{property}
In the rest of this paper, we will say that it is \emph{safe} to re-enable a task $\tau_i$ if Property~\ref{prop:dmcr:safe} holds upon the re-enablement of $\tau_i$. The way new-mode tasks are re-enabled leads to distinguish two types of protocols.
\begin{definition}[Synchronous/Asynchronous protocol~\cite{Real:2004:MCP:969960.969963}]\label{def:sync:async}
Assuming the system is switching from operating mode $M_{h}$ to operating mode $M_{\ell}$, a \emph{synchronous} protocol is a protocol that re-enables each suspended task belonging to operating mode $M_{\ell}$ simultaneously. An \emph{asynchronous} protocol is a protocol that enables every suspended task belonging to operating mode $M_{\ell}$ independently from the others, i.e.\@ some tasks belonging to $M_{\ell}$ might be enabled earlier than others.
\end{definition}
\begin{figure}
	\centering
	\includegraphics[scale=0.39]{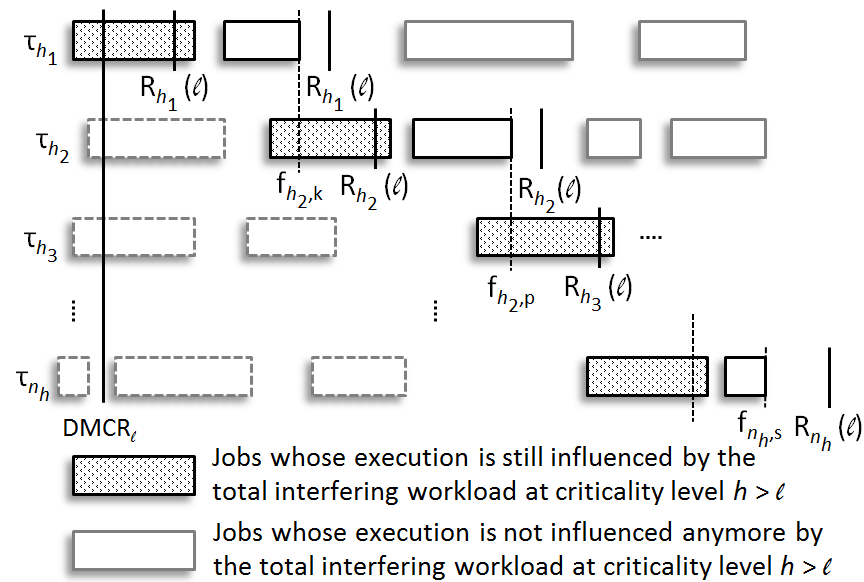}
	\caption{The decreasing mode transition protocol iteratively identifies the time instants $f_{i,k} \leq R_i(\ell)$ for every task $\tau_i\in\tau^h$.}\label{figure:dmcr}
\end{figure}
Assuming the system is executing in operating mode $M_h$, and depending on the set of functionalities the system is willing to re-enable, once the rem-jobs of criticality less than $h$ are completed, a $\texttt{DMCR}_\ell$, $1\leq\ell<h$ can be triggered. For the sake of clarity, we will assume that no $\texttt{IMCR}_{\ell+1}$ is triggered during the decreasing mode change (we will shortly discuss this potential event later). We suggest a synchronous protocol, illustrated by Figure \ref{figure:dmcr}, that works as follows: starting at time $t_{\texttt{DMCR}_\ell}$, the protocol identifies the first job $J_{h_1,k}$ released by task $\tau_{h_1}$ that completes its execution at time $f_{h_1,k} \leq r_{h_1,k} + R_{h_1}(\ell)$. From time $f_{h_1,k}$, the protocol then identifies the first job $J_{h_2,p}$ released by task $\tau_{h_2}$ that completes its execution at time $f_{h_2,p}\leq r_{h_2,p}+R_{h_2}(\ell)$. The procedure then keeps on identifying one such job for each task $\tau_i\in\tau^h$, in order of their priority (i.e.\@ it identifies such a job for task $\tau_{h_i}$ only when it has previously identified such a job for task $\tau_{h_{i-1}}$). The re-enablement of the previously suspended tasks having a criticality at least equal to $\ell$ can then take place when the procedure identifies a job $J_{n_h, s}$ that completes its execution at time $f_{n_h, s} \leq r_{n_h,s}+R_{n_h}(\ell)$. In the following, we will assume that if the protocol identifies a job $J_{h_i,k}$ that completes its execution at time $f_{h_i,k} \leq r_{h_i,k}+R_i(\ell)$, then this means that the procedure has already identified such a job for every tasks $\tau_{h_1}, \tau_{h_2}, ...,\tau_{h_{i-1}}$.

\begin{lemma}\label{lemma:decreasing:interferingworkload}
Whenever a task $\tau_{h_i}\in\tau^h$ releases a job $J_{h_i,k}$ that completes its execution at time $f_{h_i,k}\leq r_{h_i,k}+R_{h_i}(\ell)$, and no $\texttt{IMCR}_{\ell+1}$ is triggered during the mode change, then the actual interfering workload suffered by tasks $\tau_j\in\texttt{lp}(\tau_{h_i})$ from task $\tau_{h_i}$, from time $f_{h_i,k}$ onward will be less than or equal to $\citiw_{h_i,j}(\Delta, \ell)$ (resp. $\nctiw_{h_i,j}(\Delta, \ell)$) over any window of length $\Delta$, if $\tau_{h_i}$ is a carry-in (resp. non carry-in) task for $\tau_j$.
\end{lemma}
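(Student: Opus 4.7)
The plan is to show that from $f_{h_i,k}$ onward, the release pattern and job execution times of $\tau_{h_i}$ conform exactly to the worst‑case assumptions that were used offline to derive $\citiw_{h_i,j}(\Delta,\ell)$ and $\nctiw_{h_i,j}(\Delta,\ell)$. Once this match is established, the claimed inequalities follow by the very definition of those quantities.

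First I would recall the ingredients of $\citiw_{h_i,j}(\Delta,\ell)$ and $\nctiw_{h_i,j}(\Delta,\ell)$: both are obtained by assuming (i) a sporadic release pattern of $\tau_{h_i}$ with minimum inter‑arrival time $T_{h_i}$, (ii) every job of $\tau_{h_i}$ executing for at most $C_{h_i}(\ell)$ time units, and (iii) for the carry‑in case, a single partially executed job straddling the left end of the window, arranged so as to maximize the intrusion into the window. Next, I would verify that these three conditions are actually in force after $f_{h_i,k}$. Condition (i) holds unconditionally because $\tau_{h_i}$ is sporadic. Condition (ii) follows from the assumption that no $\texttt{IMCR}_{\ell+1}$ is triggered during the mode change: by the definition of an $\texttt{IMCR}$, the absence of such an event implies that no job of $\tau_{h_i}$ observed so far has exceeded $C_{h_i}(\ell)$, and every future job will by hypothesis stay within this bound as well; otherwise an $\texttt{IMCR}_{\ell+1}$ would necessarily be raised, contradicting the hypothesis.

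Then I would do a case split on an arbitrary window $[t_a,t_a+\Delta)$ with $t_a\geq f_{h_i,k}$. If $\tau_{h_i}$ has no pending job at $t_a$, it is non carry‑in for $\tau_j$ with respect to this window, and the maximum cumulative execution inside the window—computed under (i) and (ii)—is precisely $\nctiw_{h_i,j}(\Delta,\ell)$. If a job of $\tau_{h_i}$ is still pending at $t_a$, it must have been released at some $r\geq r_{h_i,k}+T_{h_i}$ (because $J_{h_i,k}$ already completed by $f_{h_i,k}\leq t_a$) and carries at most $C_{h_i}(\ell)$ units of remaining work; combined with the sporadic pattern of later releases, this reproduces exactly the configuration that attains $\citiw_{h_i,j}(\Delta,\ell)$. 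In either case the actual interfering workload of $\tau_{h_i}$ on $\tau_j$ over the window is bounded by the corresponding quantity, which is what the lemma claims.

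The main obstacle I anticipate is ruling out any residual backlog of $\tau_{h_i}$ that could have accumulated during the high‑criticality period and would violate the carry‑in template used for $\nctiw$/$\citiw$ at level $\ell$. This is precisely where the hypothesis $f_{h_i,k}\leq r_{h_i,k}+R_{h_i}(\ell)$ is needed: it certifies that at the moment $f_{h_i,k}$ the system is in a state indistinguishable from one reachable in a pure $\ell$-criticality scenario—the most recent job of $\tau_{h_i}$ finished within its $\ell$‑WCRT, and the next release respects $T_{h_i}$. I would make this formal by observing that after $f_{h_i,k}$, any prefix of the schedule of $\tau_{h_i}$ can be matched by a legal $\ell$-scenario with the same release times and execution times, so that the worst‑case workload bounds derived for $\ell$-scenarios apply verbatim.
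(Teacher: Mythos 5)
Your argument is correct and follows essentially the same route as the paper's own proof: the bounds $\citiw_{h_i,j}(\Delta,\ell)$ and $\nctiw_{h_i,j}(\Delta,\ell)$ were derived offline assuming sporadic releases of $\tau_{h_i}$ with executions bounded by $C_{h_i}(\ell)$, and the absence of an $\texttt{IMCR}_{\ell+1}$ during the mode change guarantees exactly these assumptions hold for the actual behavior of $\tau_{h_i}$ from $f_{h_i,k}$ onward, so the actual interfering workload cannot exceed the corresponding bound. Your extra elaboration on the carry-in window configuration and on the role of the hypothesis $f_{h_i,k}\leq r_{h_i,k}+R_{h_i}(\ell)$ only adds detail that the paper's terser proof leaves implicit, without changing the underlying approach.
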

\begin{proof}
The upper-bound on the interfering workload suffered by task $\tau_j\in\texttt{lp}(\tau_{h_i})$ from task $\tau_{h_i}$ over an interval of length $\Delta$, in any $\ell$-interval, is computed assuming $\tau_{h_i}$ will release jobs that will execute for no more than $C_{h_i}(\ell)$ time units. However, if no $\texttt{IMCR}_{\ell+1}$ is triggered, then it must be the case that every job released by task $\tau_{h_i}$ executes for no more than $C_{h_i}(\ell)$ time units. Therefore, from $f_{h_i,k}$ onward, the actual total interfering workload suffered by tasks $\tau_j\in\texttt{lp}(\tau_{h_i})$ from task $\tau_{h_i}$ will be less than or equal to $\citiw_{h_i,j}(\Delta, \ell)$ (resp. $\nctiw_{h_i,j}(\Delta, \ell)$) over any window of length $\Delta$ if $\tau_{h_i}$ is a carry-in (resp. non carry-in) task for $\tau_j$.
\end{proof}

\begin{lemma}\label{lemma:decreasing:totalInterferingWorkload}
Let us assume the protocol identified a job $J_{h_{i-1},k}$ that completed its execution at time $f_{h_{i-1},k} \leq r_{h_{i-1},k}+R_{h_{i-1}}(\ell)$. From time $f_{h_{i-1},k}$ onward the actual total interfering workload $\inter_i^*(\Delta,\ell)$ suffered by task $\tau_{h_i}$ over any window of size $\Delta$ will be less than or equal to $\upinter_i(\Delta,\ell)$.
\end{lemma}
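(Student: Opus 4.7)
The plan is to combine Lemma~\ref{lemma:decreasing:interferingworkload} applied task by task with the carry-in bound of Property~\ref{property:guan}, reproducing the structure of Equation~\ref{eq:upperBoundTotIntWorkload}.

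First, I would exploit the iterative convention stated just before Lemma~\ref{lemma:decreasing:interferingworkload}: identifying the job $J_{h_{i-1},k}$ presupposes that the protocol has already identified, for every $j < i-1$, some job $J_{h_j,k_j}$ of $\tau_{h_j}$ completing at a time $f_{h_j,k_j} \leq r_{h_j,k_j}+R_{h_j}(\ell)$, with $f_{h_j,k_j} \leq f_{h_{i-1},k}$. Consequently, from time $f_{h_{i-1},k}$ onward, the hypothesis of Lemma~\ref{lemma:decreasing:interferingworkload} is satisfied for every task $\tau_{h_j}\in\texttt{hp}(\tau_{h_i})$ simultaneously.

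Next, fixing an arbitrary window of length $\Delta$ starting at or after $f_{h_{i-1},k}$, I would apply Lemma~\ref{lemma:decreasing:interferingworkload} to each such $\tau_{h_j}$ to bound its individual interfering workload on $\tau_{h_i}$ within the window by $\nctiw_{h_j,h_i}(\Delta,\ell)$ when $\tau_{h_j}$ is non carry-in, and by $\citiw_{h_j,h_i}(\Delta,\ell)$ when it is carry-in. Summing these per-task bounds and invoking Property~\ref{property:guan}, the worst case arises when at most $m-1$ tasks act as carry-in, namely those maximizing $\difftiw_{h_j,h_i}(\Delta,\ell)$, i.e.\@ the set $\texttt{hp}_{m-1}(\tau_{h_i})$. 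Rewriting each carry-in contribution as $\nctiw_{h_j,h_i}(\Delta,\ell)+\difftiw_{h_j,h_i}(\Delta,\ell)$ and each non carry-in contribution as $\nctiw_{h_j,h_i}(\Delta,\ell)$ yields exactly
\[
\inter_i^*(\Delta,\ell) \leq \sum_{\tau_j\in\texttt{hp}(\tau_{h_i})} \nctiw_{j,h_i}(\Delta,\ell) + \sum_{\tau_j\in\texttt{hp}_{m-1}(\tau_{h_i})} \difftiw_{j,h_i}(\Delta,\ell) = \upinter_i(\Delta,\ell),
\]
as required.

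The main obstacle is the bookkeeping around the scope of Lemma~\ref{lemma:decreasing:interferingworkload}: its per-task bounds rely on the assumption that no $\texttt{IMCR}_{\ell+1}$ is triggered during the transition, so that every higher-priority job indeed respects its level-$\ell$ WCET. This hypothesis must be carried over implicitly from the lemma's statement; without it neither $\nctiw$ nor $\citiw$ would bound the actual per-task interfering workloads. A secondary concern is to justify that the at-most-$m-1$ carry-in tasks furnished by Property~\ref{property:guan} can, in the worst case, be taken to be precisely the members of $\texttt{hp}_{m-1}(\tau_{h_i})$, which is immediate since that set is defined so as to maximize the sum of $\difftiw_{j,h_i}(\Delta,\ell)$.
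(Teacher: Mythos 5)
Your proposal is correct and follows essentially the same route as the paper's own proof: apply Lemma~\ref{lemma:decreasing:interferingworkload} to each task in $\texttt{hp}(\tau_{h_i})$ and then assemble the per-task bounds into the expression of Equation~\ref{eq:upperBoundTotIntWorkload} via the at-most-$(m-1)$ carry-in argument of Property~\ref{property:guan}. Your version is in fact slightly more careful than the paper's, in that it makes explicit both the iterative convention guaranteeing the hypothesis of Lemma~\ref{lemma:decreasing:interferingworkload} for \emph{every} higher-priority task and the standing assumption that no $\texttt{IMCR}_{\ell+1}$ occurs during the transition.
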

\begin{proof}
From Lemma~\ref{lemma:decreasing:interferingworkload}, we know that from time $f_{h_{i-1},k}$ onward, the actual interfering workload of every task $\tau_j\in\texttt{hp}(\tau_{h_i})$ on $\tau_i$ is less than or equal to $\citiw_{j,h_i}(\Delta,\ell)$ or $\nctiw_{j,h_i}(\Delta,\ell)$, depending whether $\tau_j$ is a carry-in task for $\tau_{h_i}$ or not. Furthermore, from Equation \ref{eq:upperBoundTotIntWorkload}, we know that an upper-bound on the total interfering workload suffered by task $\tau_{h_i}$ is given by summing $\nctiw_{j,h_i}(\Delta,\ell)$ $\forall\tau_j\in\texttt{hp}(\tau_{h_i})$ with the $m-1$ largest values of $\difftiw_{j,h_i}(\Delta,\ell)$. Thus, the actual total interfering workload suffered by $\tau_{h_i}$ will be less than or equal to $\upinter_{h_i}(\Delta,\ell)$.
\end{proof}

\begin{corollary}\label{decreasing:corollaryTotIntWork}
Upon a $\texttt{DMCR}_\ell$, when the protocol identifies a job $J_{n_h,s}$ that completed its execution at time $f_{n_h,s} \leq r_{n_h,s}+R_{n_h}(\ell)$, then from time $f_{n_h,s}$ onward, the actual total interfering workload suffered by any task $\tau_i\in\tau$ is less than or equal to $\upinter_i(\Delta,\ell)$.
\end{corollary}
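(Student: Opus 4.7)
The plan is to prove the corollary by a straightforward induction on the priority ordering within $\tau^h$, with a final step to cover the tasks that are being re-enabled at time $f_{n_h,s}$. The engine doing all the work will be Lemma~\ref{lemma:decreasing:totalInterferingWorkload}; the corollary is essentially its iterated application along the chain of jobs identified by the protocol.

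First, I would unfold the definition of the protocol to note that identifying $J_{n_h,s}$ implies that the protocol has already identified, in order of decreasing priority, jobs $J_{h_1,k_1}, J_{h_2,k_2}, \ldots, J_{h_{n_h-1},k_{n_h-1}}$, each completing at a time $f_{h_j,k_j} \leq r_{h_j,k_j}+R_{h_j}(\ell)$, and with $f_{h_1,k_1} \leq f_{h_2,k_2} \leq \cdots \leq f_{n_h,s}$. Then I would induct on the priority rank $i$ within $\tau^h$. The base case is trivial: the highest-priority task $\tau_{h_1}$ suffers no interference, so $\inter_{h_1}^*(\Delta,\ell)=0 \leq \upinter_{h_1}(\Delta,\ell)$. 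For the inductive step, Lemma~\ref{lemma:decreasing:totalInterferingWorkload} applied to $\tau_{h_i}$ (using the already-identified job $J_{h_{i-1},k_{i-1}}$) gives $\inter_{h_i}^*(\Delta,\ell) \leq \upinter_{h_i}(\Delta,\ell)$ over any window of length $\Delta$ starting at or after $f_{h_{i-1},k_{i-1}}$; since $f_{n_h,s} \geq f_{h_{i-1},k_{i-1}}$, the bound still holds for any window starting at or after $f_{n_h,s}$.

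It remains to cover the tasks $\tau_i\in\tau\setminus\tau^h$, i.e.\ the tasks being re-enabled at $t=f_{n_h,s}$. For any such task, the higher-priority set $\texttt{hp}(\tau_i)$ splits into (i) tasks in $\tau^h$, whose per-task interfering workload on $\tau_i$ is bounded by $\citiw_{j,i}(\Delta,\ell)$ or $\nctiw_{j,i}(\Delta,\ell)$ from time $f_{n_h,s}$ onward by Lemma~\ref{lemma:decreasing:interferingworkload}, and (ii) other re-enabled tasks in $\texttt{hp}(\tau_i)\setminus\tau^h$, none of which has a carry-in job at time $f_{n_h,s}$ because all such tasks were suspended just before being re-enabled. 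Since the offline upper bound $\upinter_i(\Delta,\ell)$ in Equation~\ref{eq:upperBoundTotIntWorkload} already accounts for up to $m-1$ arbitrary carry-in contributors, the actual contribution of category (ii) can only be smaller; summing both contributions and invoking Property~\ref{property:guan} yields $\inter_i^*(\Delta,\ell)\leq\upinter_i(\Delta,\ell)$ for any window starting at or after $f_{n_h,s}$.

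I expect the main obstacle to be bookkeeping rather than conceptual: one has to be careful that the windows over which the actual interfering workload is measured all start at or after the relevant completion times $f_{h_{i-1},k_{i-1}}$, so that Lemma~\ref{lemma:decreasing:interferingworkload} legitimately applies for every higher-priority task simultaneously. Once the windows are aligned to start at $f_{n_h,s}$, the bounds chain together and the result drops out. The assumption that no $\texttt{IMCR}_{\ell+1}$ occurs during the decreasing mode change is implicitly inherited from Lemmas~\ref{lemma:decreasing:interferingworkload} and~\ref{lemma:decreasing:totalInterferingWorkload}, and should be stated explicitly at the outset of the proof.
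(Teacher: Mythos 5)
The paper states this corollary without any proof, treating it as the evident iterated application of Lemma~\ref{lemma:decreasing:totalInterferingWorkload} along the chain of jobs identified by the protocol. Your proposal makes that iteration explicit and, more importantly, notices something the paper glosses over: the corollary asserts the bound for \emph{any} $\tau_i\in\tau$, whereas Lemma~\ref{lemma:decreasing:totalInterferingWorkload} only speaks about tasks $\tau_{h_i}\in\tau^h$. Your extra step for $\tau_i\in\tau\setminus\tau^h$ --- splitting $\texttt{hp}(\tau_i)$ into $\tau^h$-tasks (bounded per Lemma~\ref{lemma:decreasing:interferingworkload}) and freshly re-enabled tasks with no carry-in, then appealing to Property~\ref{property:guan} and the fact that $\upinter_i(\Delta,\ell)$ in Equation~\ref{eq:upperBoundTotIntWorkload} already accounts for every task in $\tau^\ell$ --- is essentially the argument the paper postpones to Lemma~\ref{lemma:decreasing:safeOldModeTasks}. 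Pulling it forward here is a legitimate and arguably cleaner decomposition.

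One wrinkle to fix: your base case claims $\inter_{h_1}^*(\Delta,\ell)=0$ and your inductive step for tasks in $\tau^h$ relies only on Lemma~\ref{lemma:decreasing:totalInterferingWorkload}. Both implicitly assume that no task outside $\tau^h$ has higher priority than a task inside $\tau^h$. The OPA assignment of Section~\ref{sec:audsley} gives no such guarantee: a suspended task of criticality below $h$ may well sit above $\tau_{h_1}$ in the priority order, and once it is re-enabled at $f_{n_h,s}$ it interferes with $\tau_{h_1}$, so the ``no interference'' base case is false for windows after re-enablement --- which is exactly the regime the corollary is about (``from time $f_{n_h,s}$ onward''). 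The repair is the same argument you already use for category (ii): for every $\tau_{h_i}\in\tau^h$, the higher-priority re-enabled tasks contribute at most their non-carry-in interfering workload (no carry-in job at the instant of re-enablement), and $\upinter_{h_i}(\Delta,\ell)$ already budgets for all tasks in $\texttt{hp}(\tau_{h_i})\cap\tau^\ell$, not only those in $\tau^h$. State that once, apply it uniformly to all of $\tau$, and the case split between $\tau^h$ and $\tau\setminus\tau^h$ collapses into a single argument.
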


From Corollary \ref{decreasing:corollaryTotIntWork}, we will now show that it is safe to re-enable every suspended task $\tau_k\in\tau^\ell$ at time $f_{n_h,s}$.

\begin{lemma}\label{lemma:decreasing:safeNewModeTasks}
Upon a $\texttt{DMCR}_\ell$, when protocol identifies a job $J_{n_h,s}$ that completed its execution at time $f_{n_h,s} \leq r_{n_h,s}+R_{n_h}(\ell)$, then the jobs released by every suspended task $\tau_k\in\tau^\ell$ from time $f_{n_h,s}$ onward will meet their deadlines in any scenario of criticality level at most $L_k$.
\end{lemma}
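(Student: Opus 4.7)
The plan is to bound the actual total interfering workload suffered by any job released by a re-enabled task $\tau_k\in\tau^\ell$ after $f_{n_h,s}$ by the offline quantity $\upinter_k(\Delta,\ell)$, and then conclude from the OPA schedulability and the valid increasing transition protocols of Section~\ref{subsec:increasing}. The strategy mirrors Theorem~\ref{theo:mainTheo} but has to accommodate the asymmetry between tasks of $\tau^h$ (which have been running continuously) and tasks of $\tau^\ell\setminus\tau^h$ (which are just being re-enabled).

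First I would fix an arbitrary re-enabled task $\tau_k\in\tau^\ell$ and an arbitrary job $J_{k,q}$ it releases at time $r_{k,q}\geq f_{n_h,s}$. Over any window $[r_{k,q}, r_{k,q}+\Delta)$, I would split $\texttt{hp}(\tau_k)$ into two groups: tasks in $\tau^h$ and tasks in $\tau^\ell\setminus\tau^h$. For the first group, Lemma~\ref{lemma:decreasing:interferingworkload} already caps each task's interfering workload by $\citiw_{j,k}(\Delta,\ell)$ or $\nctiw_{j,k}(\Delta,\ell)$, depending on its carry-in status at $r_{k,q}$. For the second group, no job of these tasks is in execution at $f_{n_h,s}$, so every job of theirs that contributes to the window is released at or after $f_{n_h,s}$, and the offline bounds $\citiw$/$\nctiw$, which take the supremum over all admissible release patterns under the WCET-at-level-$\ell$ assumption, apply a fortiori.

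Next I would aggregate these contributions using Equation~\ref{eq:upperBoundTotIntWorkload}, retaining the at-most $m-1$ carry-in tasks with the largest $\difftiw_{j,k}(\Delta,\ell)$, to obtain $\inter_k^*(\Delta,\ell)\leq\upinter_k(\Delta,\ell)$. Plugging this into the fixed-point recurrence of Section~\ref{sec:wcrt} bounds the response time of $J_{k,q}$ by $R_k(\ell)$, and the OPA procedure guarantees $R_k(\ell)\leq D_k$. To cover scenarios of criticality $\ell'\in(\ell, L_k]$, I would chain with the valid increasing transition protocols of Section~\ref{subsec:increasing}: whenever an $\texttt{IMCR}_{\ell'+1}$ is later triggered with $\ell'+1\leq L_k$, $\tau_k$ still belongs to $M_{\ell'+1}$, and the $(\ell'+1)$-feasibility preserved by the chosen protocol keeps $\tau_k$'s deadlines met.

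The main obstacle, in my view, is the second half of the decomposition. The offline derivation of $\citiw_{j,k}(\Delta,\ell)$ and $\nctiw_{j,k}(\Delta,\ell)$ implicitly assumes that $\tau_j$ has been continuously enabled, so one has to verify carefully that a $\tau_j\in\tau^\ell\setminus\tau^h$ which has been suspended and is re-enabled synchronously at $f_{n_h,s}$ produces an interference pattern dominated by the offline worst case; this follows because the offline supremum already ranges over all admissible release sequences, including those whose first release coincides with $f_{n_h,s}$, so a re-enabled $\tau_j$ can only do better. A secondary subtlety, worth double-checking, is that OPA may order some $\tau_k\in\tau^\ell$ above tasks in $\tau^h$, so the priority induction underlying Theorem~\ref{theo:mainTheo} has to be revisited to treat a mix of still-running and newly-reactivated higher-priority tasks within the same carry-in / non carry-in accounting.
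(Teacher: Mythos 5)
Your proposal is correct and follows essentially the same route as the paper: the paper's proof simply invokes Corollary~\ref{decreasing:corollaryTotIntWork} to bound the actual total interfering workload of any task by $\upinter_i(\Delta,\ell)$ from $f_{n_h,s}$ onward and then concludes from $\mathsf{MSM}$-schedulability, which is exactly your bound $\inter_k^*(\Delta,\ell)\leq\upinter_k(\Delta,\ell)$ fed into the WCRT fixed point and the OPA guarantee $R_k(\ell)\leq D_k$. Your explicit split of $\texttt{hp}(\tau_k)$ into still-running tasks of $\tau^h$ (via Lemma~\ref{lemma:decreasing:interferingworkload}) and freshly re-enabled tasks of $\tau^\ell$ (covered because the offline bounds already assume every $\tau^\ell$ task releases jobs of up to $C_i(\ell)$), as well as your treatment of scenarios of criticality above $\ell$, just spells out details the paper compresses into Corollary~\ref{decreasing:corollaryTotIntWork}, Lemma~\ref{lemma:decreasing:safeOldModeTasks}, and the appeal to $\mathsf{MSM}$-schedulability.
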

\begin{proof}
The WCRT at criticality level $\ell$ of each task $\tau_k\in\tau^\ell$ was computed assuming a total interfering workload of $\upinter_i(\Delta,\ell)$. But from Corollary \ref{decreasing:corollaryTotIntWork}, we know that at time $f_{n_h,s}$, the actual total interfering workload suffered by any task $\tau_i$ in the system is less than or equal to $\upinter_i(\Delta,\ell)$. Since the system was deemed $\mathsf{MSM}$-schedulable, every job released by a task $\tau_k\in\tau^\ell$ from time $f_{n_h,s}$ onward will meet its deadline in any scenario of criticality level at most $L_k$.
\end{proof}

\begin{lemma}\label{lemma:decreasing:safeOldModeTasks}
Assume the system is executing in operating mode $M_h$. Upon a $\texttt{DMCR}_\ell$, when protocol identifies a job $J_{n_h,s}$ that completed its execution at time $f_{n_h,s} \leq r_{n_h,s}+R_{n_h}(\ell)$, the re-enablement of every suspended task $\tau_k\in\tau^\ell$ will not jeopardize the $h$-periodicity and $h$-feasibility.
\end{lemma}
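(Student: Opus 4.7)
The plan is to establish $h$-periodicity and $h$-feasibility separately from time $f_{n_h,s}$ onward, reducing $h$-feasibility to $\mathsf{MSM}$-schedulability at criticality level $\ell$ via Corollary~\ref{decreasing:corollaryTotIntWork}.

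I would begin with $h$-periodicity, which is essentially immediate: the decreasing transition protocol re-enables only the tasks in $\tau^\ell \setminus \tau^h$, while every task in $\tau^h$ has remained enabled continuously since the system entered mode $M_h$. Hence the activation pattern of each $\tau_i \in \tau^h$ is unaltered, which is precisely the requirement.

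For $h$-feasibility, I would fix an arbitrary task $\tau_i \in \tau^h \subseteq \tau^\ell$ together with a job $J_{i,q}$ whose release time satisfies $r_{i,q} \geq f_{n_h,s}$, and argue that $f_{i,q} \leq d_{i,q}$. By Corollary~\ref{decreasing:corollaryTotIntWork}, the actual total interfering workload suffered by $\tau_i$ over any window of length $\Delta$ starting at $r_{i,q}$ is bounded by $\upinter_i(\Delta,\ell)$ as long as the only active higher-priority tasks belong to $\tau^h$. I would then account for the additional interference introduced by the tasks of $\tau^\ell \setminus \tau^h$ once they are re-enabled at $f_{n_h,s}$: such a task has no available job at that instant, so relative to any subsequent window it behaves either as a non carry-in task, contributing at most $\nctiw_{j,i}(\Delta,\ell)$, or, in the worst case, as a carry-in task whose contribution is already capped by $\citiw_{j,i}(\Delta,\ell)$. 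Because both terms are aggregated in Equation~\ref{eq:upperBoundTotIntWorkload} over all of $\texttt{hp}(\tau_i)$ at level $\ell$, the total actual interfering workload on $\tau_i$ after re-enablement still does not exceed $\upinter_i(\Delta,\ell)$. Since $\tau$ is $\mathsf{MSM}$-schedulable, the OPA procedure of Algorithm~\ref{alg:priorityAssignment} guarantees $R_i(\ell) \leq D_i$ for every $\tau_i \in \tau^\ell$, and in particular for every $\tau_i \in \tau^h$, so $J_{i,q}$ completes by $d_{i,q}$.

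The main obstacle I anticipate is this intermediate accounting step: Corollary~\ref{decreasing:corollaryTotIntWork} only records the interference from the tasks currently active in $\tau^h$, and one must verify that reinjecting the tasks of $\tau^\ell \setminus \tau^h$ into the schedule does not overshoot the level-$\ell$ bound $\upinter_i(\Delta,\ell)$. The key observation that makes the argument close is that a task re-enabled with no pending work at $f_{n_h,s}$ cannot, over any subsequent window, generate more interference than Equation~\ref{eq:upperBoundTotIntWorkload} already budgets for it at level $\ell$, so the bound holds term-by-term, and the lemma follows.
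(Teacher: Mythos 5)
Your proposal is correct and follows essentially the same route as the paper's proof: the crux in both is that the level-$\ell$ bound $\upinter_i(\Delta,\ell)$ from Equation~\ref{eq:upperBoundTotIntWorkload} already budgets for the interference of \emph{every} task in $\tau^\ell$, including the tasks being re-enabled, so after Corollary~\ref{decreasing:corollaryTotIntWork} the actual total interfering workload stays within $\upinter_i(\Delta,\ell)$ and $\mathsf{MSM}$-schedulability (i.e.\@ $R_i(\ell)\leq D_i$ from the OPA analysis) yields $h$-feasibility, with $h$-periodicity immediate since the tasks of $\tau^h$ were never disturbed. Your additional carry-in/non-carry-in bookkeeping and the explicit split between periodicity and feasibility are just a more detailed rendering of the paper's terser argument.
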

\begin{proof}
From Corollary \ref{decreasing:corollaryTotIntWork}, we know that at time $f_{n_h,s}$, the actual total interfering workload suffered by any task $\tau_i$ in the system is less than or equal to $\upinter_i(\Delta,\ell)$. However, this upper-bound is computed assuming that \emph{every} task $\tau_k\in\tau^\ell$ will release jobs that complete their execution after no more than $C_i(\ell)$ time units. Therefore, the actual total interfering workload suffered by any task $\tau_j$ upon the re-enablement of every suspended task $\tau_k\in\tau^\ell$ will still be less than or equal to $\upinter_i(\Delta,\ell)$. Furthermore, since the system was deemed $\mathsf{MSM}$-schedulable, it follows that every job released by tasks $\tau_j\in\tau^h$ will meet its deadline in any scenario of criticality up to level $L_j\geq h$. It follows that both the $h$-periodicity and $h$-feasibility are preserved upon the re-enablement of every suspended task $\tau_k\in\tau^\ell$.
\end{proof}

\begin{theorem}\label{theo:decreasing:mainResult}
Let us assume the system is executing in operating mode $M_h$. Upon a $\texttt{DMCR}_\ell$, when the protocol identifies a job $J_{n_h,s}$ that completed its execution at time $f_{n_h,s} \leq r_{n_h,s}+R_{n_h}(\ell)$, it is safe to re-enable the suspended task belonging to the operating mode $M_\ell$ at time $f_{n_h,k}$.
\end{theorem}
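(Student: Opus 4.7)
The plan is to recognize that this theorem is essentially a packaging result: the two technical lemmas immediately preceding it (Lemma on new-mode tasks and Lemma on old-mode tasks) already discharge the two conditions that define safety in Property~\ref{prop:dmcr:safe}. So the proof will consist of restating what ``safe'' means and then pointing to the appropriate lemma for each clause, with essentially no new calculation required.

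More concretely, I would first recall Property~\ref{prop:dmcr:safe}, which says that re-enabling a previously suspended task $\tau_k\in\tau^\ell$ at some time $t\geq t_{\texttt{DMCR}_\ell}$ is safe provided (i) it does not jeopardize the $h$-periodicity and $h$-feasibility of the currently running high-criticality tasks, and (ii) the re-enabled task $\tau_k$ itself is guaranteed to meet its deadline from $t$ onwards, as long as no $\texttt{IMCR}_{\ell+1}$ is triggered at a later time. The theorem's goal is to verify both properties at time $t = f_{n_h,s}$.

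Condition (i) is exactly the statement of Lemma~\ref{lemma:decreasing:safeOldModeTasks}: at time $f_{n_h,s}$, reintroducing every suspended task of $\tau^\ell$ does not push the actual total interfering workload suffered by any $\tau_j\in\tau^h$ above the precomputed upper bound $\upinter_j(\Delta,\ell)$, so $\mathsf{MSM}$-schedulability continues to guarantee the deadlines of the $h$-criticality tasks, preserving $h$-periodicity and $h$-feasibility. Condition (ii) is exactly the statement of Lemma~\ref{lemma:decreasing:safeNewModeTasks}: from $f_{n_h,s}$ onwards, every re-enabled task $\tau_k\in\tau^\ell$ sees an actual total interfering workload bounded by $\upinter_k(\Delta,\ell)$, hence by $\mathsf{MSM}$-schedulability at level $\ell$ all its future jobs meet their deadlines in any scenario of criticality at most $L_k$. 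Combining the two lemmas yields the theorem.

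There is really no significant obstacle: the hard work has been done in Lemmas~\ref{lemma:decreasing:interferingworkload} and~\ref{lemma:decreasing:totalInterferingWorkload} and in Corollary~\ref{decreasing:corollaryTotIntWork}, which were the steps that actually bounded the accumulated interfering workload after the iterative identification of jobs completing within their WCRT at level $\ell$. The only minor point to be careful about in the write-up is to explicitly state that re-enablement happens simultaneously for all suspended tasks of $\tau^\ell$ (this is the synchronous protocol of Definition~\ref{def:sync:async}), so that Lemma~\ref{lemma:decreasing:safeOldModeTasks}'s assumption about re-enabling every such $\tau_k$ at once indeed matches the situation covered by the theorem.
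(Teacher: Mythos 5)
Your proposal matches the paper's own proof: both verify the two clauses of Property~\ref{prop:dmcr:safe} at time $f_{n_h,s}$ by citing Lemma~\ref{lemma:decreasing:safeOldModeTasks} for preservation of $h$-periodicity and $h$-feasibility and Lemma~\ref{lemma:decreasing:safeNewModeTasks} for the deadlines of the re-enabled tasks. The remark about synchronous re-enablement is a harmless clarification; no further work is needed.
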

\begin{IEEEproof}
To prove this theorem, we have to show that Property \ref{prop:dmcr:safe} holds upon the re-enablement of every suspended task $\tau_j\in\tau^{\ell}$ at time $f_{n_h,k}$. Lemma \ref{lemma:decreasing:safeNewModeTasks} proved that every suspended task $\tau_i$ belonging to the operating mode $M_\ell$ could release jobs that would be able to meet their deadline if $\tau_i$ was re-enabled at time $f_{n_h,k}$, and provided that no $\texttt{IMCR}_{\ell+1}$ was triggered. Furthermore, Lemma \ref{lemma:decreasing:safeOldModeTasks} proved that the re-enablement of every suspended task $\tau_i$ belonging to the operating mode $M_\ell$ would not jeopardize the $h$-periodicity and $h$-feasibility. It follows that at time $f_{n_h,k}$, it is safe to re-enable every suspended task belonging to the operating mode $M_\ell$.
\end{IEEEproof}
Theorem \ref{theo:decreasing:mainResult} proves that it will eventually be possible to decrease the criticality level of the system, provided the computational demand of the tasks having a criticality higher than or equal to $h$ decreases. It follows that the suspension delay suffered by tasks having a criticality less than $h$ is reduced. In practice however, if an $\texttt{IMCR}_{\ell+1}$ is triggered during during the $\texttt{DMCR}_\ell$ handling, then the procedure that consists in finding the first job $J_{{n_h},s}$ that completes its execution a time $f_{{n_h},s}\leq r_{n_h,s}+R_{n_h}(\ell)$ is aborted. Indeed, in that case, we can no longer guarantee that the additional interfering workload generated by tasks having a criticality equal to $\ell$ will not jeopardize the $(\ell+1)$-periodicity and $(\ell+1)$-feasibility.
\begin{corollary}
Upon a $\texttt{DMCR}_\ell$, reenabling every suspended task belonging to operating mode $M_\ell$ upon the identification of a job $J_{n_h,s}$ that completed its execution at time $f_{n_h,s} \leq r_{n_h,s}+R_{n_h}(\ell)$ is a valid decreasing transition protocol.
\end{corollary}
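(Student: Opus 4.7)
The corollary is essentially a restatement of Theorem~\ref{theo:decreasing:mainResult} in the language of the validity definition, so the plan is to derive it as an immediate consequence rather than rebuild the argument from scratch. My approach is to unpack the definition of a valid decreasing transition protocol, match each requirement against what Theorem~\ref{theo:decreasing:mainResult} already gives us through Property~\ref{prop:dmcr:safe}, and then briefly address the only scenario not covered by the theorem statement, namely the possibility that an $\texttt{IMCR}_{\ell+1}$ is triggered during the handling of the $\texttt{DMCR}_{\ell}$.

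First, I would recall that the system is assumed to be executing in operating mode $M_h$ when the $\texttt{DMCR}_{\ell}$ occurs, and that by definition the protocol is valid if it guarantees the $h$-periodicity and $h$-feasibility throughout the handling of the decreasing request. Before the identification of the triggering job $J_{n_h,s}$, nothing in the schedule has changed with respect to the old mode $M_h$: no previously suspended task is re-enabled, so all tasks of criticality at least $h$ continue to be scheduled exactly as in $M_h$ and therefore both $h$-periodicity and $h$-feasibility trivially hold up to time $f_{n_h,s}$.

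Next, I would invoke Theorem~\ref{theo:decreasing:mainResult}, which states that at time $f_{n_h,s}$ it is safe, in the sense of Property~\ref{prop:dmcr:safe}, to re-enable every suspended task belonging to $M_\ell$. The first bullet of Property~\ref{prop:dmcr:safe} is precisely the statement that the re-enablement does not jeopardize the $h$-periodicity and $h$-feasibility of the system, which is exactly the condition required for the protocol to qualify as valid. Thus from time $f_{n_h,s}$ onward the $h$-periodicity and $h$-feasibility continue to hold in spite of the re-enabled tasks, and combined with the trivial observation for the interval $[t_{\texttt{DMCR}_\ell}, f_{n_h,s})$, validity is established.

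The only subtle point, and what I expect to be the main obstacle in the presentation, is the case where an $\texttt{IMCR}_{\ell+1}$ is triggered during the scanning phase of the protocol. I would handle this by observing that in this case the protocol aborts the search for $J_{n_h,s}$ before any previously suspended task of $M_\ell$ is actually re-enabled, so no additional interfering workload is ever introduced into the schedule, and the system simply continues executing in $M_h$ (or transitions further upward through the already-established valid increasing transition protocols of Section~\ref{subsec:increasing}); in neither sub-case is $h$-periodicity or $h$-feasibility violated. With this corner case dispatched, the corollary follows directly from Theorem~\ref{theo:decreasing:mainResult} and the definition of a valid decreasing transition protocol.
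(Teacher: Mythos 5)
Your proposal is correct and matches the paper's (implicit) reasoning: the paper states this corollary without a separate proof, treating it as an immediate consequence of Theorem~\ref{theo:decreasing:mainResult} together with the definition of a valid decreasing transition protocol, and the abort-on-$\texttt{IMCR}_{\ell+1}$ corner case you dispatch is exactly the one the paper handles in the paragraph preceding the corollary. Your write-up simply makes explicit the unpacking of Property~\ref{prop:dmcr:safe} against the validity definition that the paper leaves to the reader.
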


\section{Conclusion}\label{sec:conclusion}
In this work, the first contribution consisted in formalizing mixed-criticality systems in terms of multi-moded systems, thus giving a new outlook to the problem. As a second contribution, we have shown that multi-moded approaches could help solve the consistency problems that arise when less critical tasks are brutally discarded, by enabling for a softer switch from a lower criticality level to a higher one. Finally, as a third contribution, we have highlighted the fact that the behavior of such systems could be greatly enhanced, by proving that task re-enablement was possible without compromising its safety. Those approaches allow to achieve a much more adept usage of the platform, by avoiding to vainly waste computational resources. Future work will concentrate, among other things, on suggesting asynchronous decreasing transition protocols, which allow for a progressive re-enablement of the less critical tasks.

\nocite{*}
\bibliographystyle{IEEEtran}
\bibliography{bibliographie}

\end{document}